\documentclass[11pt,letterpaper]{article}

\usepackage{geometry}
\usepackage{hyperref}
\usepackage{xspace}
\usepackage{array}
\usepackage{dcolumn}
\newcolumntype{d}[1]{D{.}{.}{#1}}
\usepackage{multirow}
\usepackage{stmaryrd}
\usepackage{tikz}
\usepackage{color}
\usetikzlibrary{plotmarks}

\usepackage{amsmath,amssymb,amsthm}
\newtheorem{theorem}{Theorem}
\newtheorem{lemma}{Lemma}

\usepackage[numbers]{natbib}
\bibliographystyle{abbrvnat}
\usepackage{optparams,ifthen,xspace}

\makeatletter
\let\citeptemp\citep
\long\def\citep@[#1][#2]#3{%
	\ifthenelse{\equal{#1}{}}{%
		\ifthenelse{\equal{#2}{}}{%
			\citeptemp[][]{#3}}{%
			\citeptemp[][#2]{#3}}}{%
		\ifthenelse{\equal{#2}{}}{%
			(#1\citeptemp[][]{#3})}{%
			(#1\citeptemp[][]{#3}, #2)}}}
\renewcommand{\citep}{\optparams{\citep@}{[][]}}
\makeatother

\def\clap#1{\hbox to 0pt{\hss#1\hss}}
\def\mathllap{\mathpalette\mathllapinternal}  \def\mathclap{\mathpalette\mathclapinternal}
\def\mathllapinternal#1#2{\llap{$\mathsurround=0pt#1{#2}$}}

\def\mathclapinternal#1#2{\clap{$\mathsurround=0pt#1{#2}$}}

\newcount\Comments
\Comments=0
\definecolor{purple}{rgb}{1,0,1}
\definecolor{orange}{rgb}{1,.5,0}
\newcommand{\kibitz}[2]{\ifnum\Comments=1\textcolor{#1}{#2}\fi}

\newcommand{\jl}[1]{\kibitz{blue}     {[John: #1]}}
\newcommand{\pd}[1]{\kibitz{green}     {[Paul: #1]}}

\newcommand{\ie}{i.e.,\xspace}
\newcommand{\eg}{e.g.,\xspace}

\newcommand{\expect}{\operatornamewithlimits{\mathbb{E}}}

\newcommand{\setprod}{\vartimes}
\newcommand{\reals}{\mathbb{R}}
\newcommand{\nats}{\mathbb{N}}

\newcommand{\bc}{\mathbf{c}}
\newcommand{\bd}{\mathbf{d}}

\newcommand{\bigO}{O}
\newcommand{\nfrac}[2]{\genfrac{}{}{0pt}{}{#1}{#2}}
\newcommand{\svmstruct}{\textit{SVM}$^{\textit{struct}}$\xspace}
\newcommand{\regret}{\mathit{rgt}}
\newcommand{\irv}{\mathit{irv}}

\newcommand{\Kpolyd}{K_{\mathit{polyd}}}
\newcommand{\Krbf}{K_{\mathit{RBF}}}

\newcommand{\secref}[1]{Section~\ref{#1}}
\newcommand{\figref}[1]{Figure~\ref{#1}}
\newcommand{\tabref}[1]{Table~\ref{#1}}

\newcommand{\lemref}[1]{Lemma~\ref{#1}}
\newcommand{\thmref}[1]{Theorem~\ref{#1}}

\newcommand{\appref}[1]{Appendix~\ref{#1}}

\sloppy

\begin{document}

\title{Payment Rules through Discriminant-Based Classifiers}

\author{%
	Paul D\"{u}tting\thanks{%
	\'Ecole Polytechnique F\'ed\'erale de Lausanne, Lausanne, Switzerland,
        Email: \texttt{paul.duetting@epfl.ch}
	}
	\and 
	Felix Fischer\thanks{%
	University of Cambridge, Cambridge, UK,
	Email: \texttt{fischerf@statslab.cam.ac.uk}
	} 
	\and Pichayut Jirapinyo\thanks{%
	Harvard University, Cambridge, MA, USA, 
	Email: \texttt{jirapinyo11@college.harvard.edu}
        }
	\and John K.~Lai\thanks{%
	Harvard University, Cambridge, MA, USA, 
	Email: \texttt{jklai@post.harvard.edu}
	}
	\and Benjamin Lubin\thanks{%
        Boston University, Boston, MA, USA,
	Email: \texttt{blubin@bu.edu}
	}
	\and David C.~Parkes\thanks{%
	Harvard University, Cambridge, MA, USA, 
	Email: \texttt{parkes@eecs.harvard.edu}
	}
}

\date{}

\maketitle

\begin{abstract}
In mechanism design it is typical to impose incentive compatibility
and then derive an optimal mechanism subject to this constraint. By
replacing the incentive compatibility requirement with the goal of
minimizing expected ex post regret, we are able to adapt statistical 
machine learning techniques to the design of payment rules.  This
computational approach to mechanism design is applicable to domains
with multi-dimensional types and situations where computational
efficiency is a concern.  Specifically, given an outcome rule and
access to a type distribution, we train a support vector machine with
a special discriminant function structure such that it implicitly
establishes a payment rule with desirable incentive properties.  We
discuss applications to a multi-minded combinatorial auction with a
greedy winner-determination algorithm and to an assignment problem
with egalitarian outcome rule.
Experimental results demonstrate both that the construction produces
payment rules with low ex post regret, and that penalizing
classification errors is effective in preventing failures of ex post
individual rationality.
\end{abstract}

\section{Introduction}

Mechanism design studies situations where a set of agents each hold
private information about their preferences over different outcomes.
The designer chooses a center that receives claims about such
preferences, selects and enforces an outcome, and optionally collects
payments. The classical approach is to impose \emph{incentive
  compatibility}, ensuring that agents truthfully report their
preferences in strategic equilibrium. Subject to this constraint, the goal
is to identify a mechanism, \ie a way of choosing an outcome and
payments based on agents' reports, that optimizes a given design
objective like social welfare, revenue, or some notion of fairness.

There are, however, significant challenges associated with this
classical approach. First of all, it can be analytically cumbersome to
derive optimal mechanisms for domains that are ``multi-dimensional''
in the sense that each agent's private information is described
through more than a single number, and few results are known in this
case.\footnote{One example of a multi-dimensional domain is a 
  combinatorial auction, where an agent's preferences are described
  by a numerical value for each of several different bundles of
  items.}
Second, incentive compatibility can be costly, in
that adopting it as a hard constraint can preclude mechanisms with
useful economic properties. For example, imposing the strongest form
of incentive compatibility, truthfulness in a dominant strategy
equilibrium or {\em strategyproofness}, necessarily leads to poor
revenue, vulnerability to collusion, and vulnerability to false-name
bidding in combinatorial auctions where valuations exhibit
complementarities among items~\citep{AuMi06a,rastegari11}.
A third difficulty occurs when the optimal mechanism has an outcome or
payment rule that is computationally intractable.

In the face of these difficulties, we adopt statistical machine
learning to automatically infer mechanisms with good incentive
properties. Rather than imposing incentive compatibility as a hard
constraint, we start from a given outcome rule and use machine
learning techniques to identify a payment rule that minimizes agents'
\emph{expected ex post regret} relative to this outcome rule. Here,
the ex post regret an agent has for truthful reporting in a given
instance is the amount by which its utility could be increased through
a misreport. While a mechanism with zero ex post regret for all inputs
is obviously strategyproof, we are not aware of any additional direct
implication in terms of equilibrium properties.\footnote{The expected
  ex post regret given a distribution over types provides an upper
  bound on the expected regret of an agent who knows its own type but
  has only distributional information on the types of other agents.
  The latter metric is also appealing, but does not seem to fit well
  with the generalization error of statistical machine learning. An
  emerging literature is developing various regret-based metrics for
  quantifying the incentive properties of
  mechanisms~\citep{PKE01a,DaMi08a,Lubi10a,carroll11}, and there also
  exists experimental support for a quantifiable measure of the {\em
    divergence} between the distribution on payoffs in a mechanism and
  that in a strategyproof reference mechanism like the VCG
  mechanism~\citep{LuPa09a}. An earlier literature had looked for
  approximate incentive compatibility or incentive compatibility in
  the large-market limit, see, \eg the recent survey by
  \citet{carroll11}.
Related to the general theme of relaxing incentive compatibility is
work of \citet{PaSo10a} that provides a qualitative ranking of
different mechanisms in terms of the number of manipulable instances, 
and work of \citet{Budi10a} that introduces an asymptotic, binary, 
design criterion regarding incentive properties in a large 
replica economy limit. Whereas the present work is constructive, 
the latter seek to explain which mechanisms are adopted in 
practice.}
Support for expected ex post regret as a quantifiable target for
mechanism design rather comes from a simple model of manipulation
where agents face a certain cost for strategic behavior. If this cost
is higher than the expected gain, agents can be assumed to behave
truthfully.
We do insist on mechanisms in which the price to an agent, conditioned
on an outcome, is independent of its report. This provides additional
robustness against manipulation in the sense that there is no local
price sensitivity.\footnote{\citet{ErKl10a} consider a metric that emphasizes
this property.}

Our approach is applicable to domains that are multi-dimensional or
for which the computational efficiency of outcome rules is a concern.
Given the implied relaxation of incentive compatibility, the intended
application is to domains in which incentive compatibility is
unavailable or undesirable for outcome rules that meet certain
economic and computational desiderata. The payment rule is learned on
the basis of a given outcome rule, and as such the framework is most
meaningful in domains where revenue considerations are secondary to
outcome considerations.

The essential insight is that the payment rule of a strategyproof
mechanism can be thought of as a classifier for predicting the
outcome: the payment rule implies a price to each agent for each
outcome, and the selected outcome must be one that simultaneously
maximizes reported value minus price for every agent. By limiting
classifiers to discriminant functions\footnote{A discriminant function
  can be thought of as a way to distinguish between different outcomes
  for the purpose of making a prediction.} with this
``value-minus-price'' structure, where the price can be an arbitrary
function of the outcome and the reports of other agents, we obtain a
remarkably direct connection between multi-class classification and
mechanism design.
For an appropriate loss function, the discriminant function of a
classifier that minimizes generalization error over a hypothesis class
has a corresponding payment rule that minimizes expected ex post
regret among all payment rules corresponding to classifiers in this
class.
Conveniently, an appropriate method exists for multi-class
classification with large outcome spaces that supports the specific
structure of the discriminant function, namely the method of {\em
  structural support vector machines}~\citep{TJHA05a,JFY09a}. Just like standard support vector machines, it allows us to
adopt non-linear kernels, thus enabling price functions that depend in
a non-linear way on the outcome and on the reported types of other
agents.

In illustrating the framework, we focus on two situations where
strategyproof payment rules are not available: a greedy outcome rule for a multi-minded combinatorial auction in which each agent is interested in a constant number of bundles, and an assignment problem with an egalitarian outcome rule, \ie an outcome rule that maximizes the
minimum value of any agent. The experimental results we obtain
are encouraging, in that they demonstrate low expected ex post regret
even when the $0/1$ classification accuracy is only moderately good, and
in particular better regret properties than those obtained through
simple VCG-based payment rules that we adopt as a baseline. In
addition, we give special consideration to the failure of ex post
individual rationality, and introduce methods to bias the
classifier to avoid these kinds of errors as well as post hoc
adjustments that eliminate them.
As far as scalability is concerned, we emphasize that the
computational cost associated with our approach occurs offline during
training. The learned payment rules have a succinct description and
can be evaluated quickly in a deployed mechanism.

\subsection*{Related Work}

\citet{Conitzer02} introduced the agenda of {\em automated mechanism
  design} (AMD), which formulates mechanism design as an optimization
problem.  The output is the {\em description} of a mechanism, \ie an
explicit mapping from types to outcomes and payments. AMD is
intractable in general, as the type space can be exponential in both
the number of agents and the number of items, but progress has
recently been made in finding approximate solutions for domains with
additive value structure and symmetry assumptions, and adopting
Bayes-Nash incentive compatibility (BIC) as the goal~\cite{CaDaWe11}.
Another approach is to search through a parameterized space of
incentive-compatible mechanisms~\citep{Guo10}.  %

A parallel literature allows outcome rules to be represented by {\em
  algorithms}, like our work, and thus extends to richer domains.
\citet{Lavi05} employ LP relaxation to obtain mechanisms satisfying
BIC for set-packing problems, achieving worst-case approximation
guarantees for combinatorial auctions. \citet{Hartline10a} and
\citet{Hartline10b} propose a general approach, applicable to both
single-parameter and multi-parameter domains, for converting any approximation
algorithm into a mechanism satisfying BIC that has essentially the same
approximation factor with respect to social welfare. This approach
differs from ours in that it adopts BIC as a target rather than the
minimization of expected ex post regret. In addition, it evaluates the
outcome rule on a number of randomly perturbed replicas of the
instance that is polynomial in the size of a discrete type space,
which is infeasible for combinatorial auctions where this size is
exponential in the number of items.  The computational requirements of
our trained rule are equivalent to that of the original outcome rule.

\citet{Laha09a,Laha10a} also adopts a kernel-based approach for
combinatorial auctions, but focuses not on learning a payment rule for
a given outcome rule but rather on solving the winner determination
and pricing problem for a given instance of a combinatorial
auction. \citeauthor{Laha09a} introduces the use of kernel methods to
compactly represent non-linear price functions, which is also present
in our work, but obtains incentive properties more indirectly through
a connection between regularization and price sensitivity.

\section{Preliminaries}
\label{sec:preliminaries}

A mechanism design problem is given by a set $N=\{1,2,\dots,n\}$ of
\emph{agents} that interact to select an element from a set
$\Omega\subseteq\setprod_{i\in N}\Omega_i$ of \emph{outcomes}, where
$\Omega_i$ denotes the set of possible outcomes for agent $i\in N$.
Agent $i\in N$ is
associated with a \emph{type} $\theta_i$ from a set $\Theta_i$ of
possible types, corresponding to the private information available to
this agent. We write $\theta=(\theta_1,\dots,\theta_n)$ for a profile
of types for the different agents, $\Theta=\setprod_{i\in N}\Theta_i$
for the set of possible type profiles, and $\theta_{-i}\in\Theta_{-i}$
for a profile of types for all agents but~$i$. Each agent $i\in N$ is
further assumed to employ preferences over $\Omega_i$, represented by
a \emph{valuation function} $v_i:\Theta_i\times\Omega_i
\rightarrow\reals$.
We assume that for all $i\in N$ and $\theta_i\in \Theta_i$ there
exists an outcome $o\in\Omega$ with $v_i(\theta_i,o_i) = 0$.

A \emph{(direct) mechanism} is a pair $(g,p)$ of an \emph{outcome
  rule} $g:\Theta\rightarrow\setprod_{i\in N}\Omega_i$ and a
\emph{payment rule} $p:\Theta\rightarrow\reals^n_{\ge 0}$. The
intuition is that the agents reveal to the mechanism a type profile
$\theta\in\Theta$, possibly different from their true types, and the
mechanism chooses outcome $g(\theta)$ and charges each agent~$i$ a
payment of $p_i(\theta)=(p(\theta))_i$.
We assume \emph{quasi-linear preferences}, so the \emph{utility} of
agent~$i$ with type $\theta_i\in\Theta_i$ given a profile
$\theta'\in\Theta$ of revealed types is
$u_i(\theta',\theta_i)=v_i(\theta_i, g_i(\theta'))-p_i(\theta')$,
where $g_i(\theta)=(g(\theta)_i)$ denotes the outcome for agent~$i$.
A crucial property of mechanism $(g,p)$ is that its outcome rule is
\emph{feasible}, \ie that $g(\theta)\in\Omega$ for all
$\theta\in\Theta$.

Outcome rule $g$ satisfies {\em consumer sovereignty} if for all $i\in
N$, $o_i\in\Omega_i$, and $\theta'_{-i}\in\Theta_{-i}$, there exists
$\theta'_i\in\Theta_i$ such that $g_i(\theta'_i,\theta'_{-i})=o_i$;
and {\em reachability of the null outcome} if for all $i\in N$,
$\theta_i\in\Theta_i$, and $\theta'_{-i}\in\Theta_{-i}$, there exists
$\theta'_i\in\Theta_i$ such that
$v_i(\theta_i,g_i(\theta'_{i},\theta'_{-i}))=0$.

Mechanism $(g,p)$ is \emph{dominant strategy
  incentive compatible}, or \emph{strategyproof}, if each agent
maximizes its utility by reporting its true type, irrespective of the
reports of the other agents, \ie if for all $i\in N$,
$\theta_i\in\Theta_i$, and $\theta'=
(\theta'_i,\theta'_{-i})\in\Theta$,
$u_i((\theta_i,\theta'_{-i}),\theta_{i}) \geq
u_i((\theta'_i,\theta'_{-i}),\theta_{i})$;
it satisfies \emph{individual rationality} (IR) if agents reporting 
their true types are guaranteed non-negative utility, \ie if for 
all $i\in N$, $\theta_i\in \Theta_i$, and $\theta'_{-i}\in\Theta_{-i}$,
$u_i((\theta_i,\theta'_{-i}),\theta_i) \geq 0$.
Observe that given reachability of the null outcome, strategyproofness 
implies individual rationality.

It is known that a mechanism $(g,p)$ is strategyproof if and only if
the payment of an agent is independent of its reported type and the
chosen outcome simultaneously maximizes the utility of all agents, \ie
if for every $\theta\in\Theta$, %
\begin{align}
  p_i(\theta) &= t_i(\theta_{-i},g_i(\theta)) && 
  \text{for all} \ i \in N, \ \text{and} \label{eq:1} \\ 
  g_i(\theta) &\in \arg\max_{\mathclap{o'_i \in \Omega_i}}
     \bigl(v_i(\theta_i, o'_i)-t_i(\theta_{-i},o_i')\bigr) && 
  \text{for all} \ i\in N, \label{eq:2} 
\end{align}
for a {\em price function}
$t_i:\Theta_{-i}\times\Omega_i\rightarrow\reals$.
This simple characterization is crucial for the main results in the
present paper, providing the basis with which the discriminant function of
a classifier can be used to induce a payment rule.

In addition, a direct characterization of strategyproofness in terms
of monotonicity properties of outcome rules explains which outcome
rules can be associated with a payment rule in order to be
``implementable'' within a strategyproof
mechanism~\citep{saks2005weak,ashlagi2010monotonicity}.
These monotonicity properties provide a fundamental constraint on when
our machine learning framework can hope to identify a payment rule
that provides full strategyproofness.  

We quantify the degree of strategyproofness of a mechanism in terms of
the {\em regret} experienced by an agent when revealing its true type,
\ie the potential gain in utility by revealing a different type
instead.
Formally, the \emph{ex post regret} of agent $i\in N$ in mechanism
$(g,p)$, given true type $\theta_i\in\Theta_i$ and reported types
$\theta'_{-i}\in\Theta_{-i}$ of the other agents, is
\[
\regret_i(\theta_i,\theta'_{-i}) = \max_{\mathclap{\theta'_i\in\Theta_i}} u_i\bigl((\theta'_i,\theta'_{-i}),\theta_i\bigr) - u_i\bigl((\theta_i,\theta'_{-i}),\theta_i\bigr) .
\]
Analogously, the {\em ex post violation of individual rationality} of
agent $i\in N$ in mechanism $(g,p)$, given true type $\theta_i\in\Theta_i$
and reported types $\theta'_{-i}\in\Theta_{-i}$ of the other agents,
is
\[
	\irv_i(\theta_i,\theta'_{-i}) = |\min(u_i((\theta_i,\theta'_{-i}), \theta_i),0)|.
\]

We consider situations where types
are drawn from a distribution with probability density function
$D:\Theta\rightarrow\reals$ such that $D(\theta)\geq 0$ and
$\int_{\theta\in\Theta}D(\theta)=1$.
Given such a distribution, and assuming that all agents report their
true types, the \emph{expected ex post regret} of agent $i\in N$ in
mechanism $(g,p)$ is
$\expect_{\theta\sim D} [\regret_i(\theta_i,\theta_{-i})]$.

Outcome rule $g$ is {\em agent symmetric} if for every permutation
$\pi$ of $N$ and all types $\theta,\theta'\in\Theta$ such that
$\theta_i=\theta'_{\pi(i)}$ for all $i\in N$,
$g_i(\theta)=g_{\pi(i)}(\theta')$ for all $i\in N$. Note that this
specifically requires that $\Theta_i=\Theta_j$ and $\Omega_i=\Omega_j$
for all $i,j\in N$.
Similarly, type distribution $D$ is {\em agent symmetric} if
$D(\theta)=D(\theta')$ for every permutation $\pi$ of $N$ and all
types $\theta,\theta'\in\Theta$ such that $\theta_i=\theta'_{\pi(i)}$
for all $i\in N$.
Given agent symmetry, a
price function $t_1: \Theta_{-1}\times\Omega_i\rightarrow\reals$ 
for agent~$1$ can be used to generate the payment rule $p$
for a mechanism $(g,p)$, with
\begin{align*}
	p(\theta)&=\bigl(t_1(\theta_{-1},g_1(\theta)), 
                         t_1(\theta_{-2},g_2(\theta)), \ldots,
                         t_1(\theta_{-n},g_n(\theta))\bigr),
\end{align*}
so that the expected ex post regret is the same for every agent.

We assume agent symmetry in the sequel, which precludes outcome rules
that break ties based on agent identity, but obviates the need to
train a separate classifier for each agent while also providing some
benefits in terms of presentation.  Because ties occur only 
with negligible probability in our experimental framework, the
experimental results are not affected by this assumption.

\section{Payment Rules from Multi-Class Classifiers}
\label{sec:defnclassification}

A \emph{multi-class classifier} is a function $h:X\rightarrow Y$,
where $X$ is an input domain and $Y$ is a discrete output domain. One
could imagine, for example, a multi-class classifier that labels a
given image as that of a dog, a cat, or some other animal. In the
context of mechanism design, we will be interested in classifiers that
take as input a type profile and output an outcome.  What
distinguishes this from an outcome rule is that we will impose
restrictions on the form the classifier can take.

Classification typically assumes an underlying target function
$h^\ast:X\rightarrow Y$, and the goal is to learn a classifier $h$
that minimizes disagreements with $h^\ast$ on a given input
distribution $D$ on $X$, based only on a finite set of {\em training
  data} $\{(x^1,y^1), \ldots, (x^\ell,y^\ell)\}=\{(x^1,h^\ast(x^1)),
\ldots, (x^\ell,h^\ast(x^\ell))\}$ with $x^1,\dots,x^\ell$ drawn from
$D$. This may be challenging because the amount of training data is
limited, or because $h$ is restricted to some hypothesis class
${\mathcal H}$ with a certain simple structure, \eg linear threshold
functions.  If $h(x)=h^\ast(x)$ for all $x\in X$, we say that $h$ is a
\emph{perfect classifier} for $h^\ast$.

We consider classifiers that are defined in terms of a {\em
  discriminant function} $f:X\times Y\rightarrow\reals$, such that
\[
	h(x)\in \arg\max_{y\in Y}f(x,y)
\]
for all $x\in X$. More specifically, we will be concerned with {\em
linear} discriminant functions of the form %
\[
	f_w(x,y) = w^T \psi(x,y) 
\]
for a weight vector $w\in\reals^m$ and a {\em feature map}
$\psi:X\times Y\rightarrow\reals^m$, where
$m\in\nats\cup\{\infty\}$.\footnote{We allow $w$ to have infinite
  dimension, but require the inner product between $w$ and $\psi(x,y)$
  to be defined in any case. Computationally the infinite-dimensional case
  is handled through the kernel trick, which is described in
  \secref{sec:kerneltrick}.}
The function $\psi$ maps input and output into an $m$-dimensional
space, which generally allows non-linear features to be expressed.

\subsection{Mechanism Design as Classification}
\label{sec:payment-rule}

Assume that we are given an outcome rule $g$ and access to a
distribution $D$ over type profiles, and want to design a
corresponding payment rule $p$ that gives the mechanism $(g,p)$ the
best possible incentive properties.
Assuming agent symmetry, we focus on a partial outcome rule $g_1:\Theta\rightarrow
\Omega_1$ and train a classifier to predict the outcome to agent~$1$.
To train a classifier, we generate examples by drawing a type profile
$\theta\in\Theta$ from distribution $D$ and applying outcome rule $g$ to
obtain the target class $g_1(\theta)\in \Omega_1$.

We impose a special structure on the hypothesis class. A classifier
$h_w:\Theta\rightarrow\Omega_1$ is {\em admissible} if it is defined
in terms of a discriminant function $f_w$ of the form
\[
	f_w(\theta,o_1) = w_1 v_1(\theta_1, o_1) + w^T_{-1} \psi(\theta_{-1},o_1)
\]
for weights $w$ such that $w_1\in\reals_{>0}$ and $w_{-1}\in\reals^m$,
and a feature map $\psi:\Theta_{-1}\times\Omega_1\rightarrow\reals^m$
for $m\in\mathbb{N}\cup\{\infty\}$.

The first term of $f_w(\theta,o_1)$ only depends on the type of
agent~$1$ and increases in its valuation for outcome $o_1$, while the
remaining terms ignore $\theta_1$ entirely. This restriction allows us
to directly infer agent-independent prices from a trained
classifier. For this, define the {\em associated price function} of an
admissible classifier $h_w$ as
\[
	t_{w}(\theta_{-1},o_1) = -\frac{1}{w_1} w^T_{-1} \psi(\theta_{-1},o_1),
\]
where we again focus on agent $1$ for concreteness. By agent symmetry, we
obtain the mechanism $(g,p_w)$ corresponding to classifier $h_w$ by
letting
\[
	p_w(\theta) = \bigl(t_w(\theta_{-1},g_1(\theta)), t_w(\theta_{-2},g_2(\theta)), \ldots,t_w(\theta_{-n},g_n(\theta))\bigr) .
\]

Even with admissibility, appropriate choices for the feature map
$\psi$ will produce rich families of classifiers, and thus ultimately
useful payment rules. Moreover, this form is compatible with
structural support vector machines, discussed in \secref{subsec:structural-svms}.

\subsection{Example: Single-Item Auction}
\label{subsec:single-item-example}

Before proceeding further, we illustrate the ideas developed so far in
the context of a single-item auction. In a single-item auction, the
type of each agent is a single number, corresponding to its value for
the item being auctioned, and there are two possible allocations
from the point of view of agent~$1$: one where it receives the item,
and one where it does not. Formally, $\Theta=\reals^n$ and
$\Omega_1=\{0,1\}$.

Consider a setting with three agents and a training set
\[(\theta^1,o_1^1) = ((1,3,5),0),\quad
	(\theta^2,o_1^2) = ((5,4,3),1),\quad
	(\theta^3,o_1^3) = ((2,3,4),0),
\]
and note that this training set is consistent with an {\em
  optimal} outcome rule, \ie one that assigns the item to an agent
with maximum value.
Our goal is to learn an admissible classifier
\[
	h_w(\theta) = \arg\max_{\mathllap{o_1\in\{0,1\}}}\; f_w(\theta, o_1) = \arg\max_{\mathclap{o_1\in\{0,1\}}}\; w_1 v_1(\theta_1, o_1) + w^T_{-1} \psi(\theta_{-1}, o_1)
\]
that performs well on the training set. Since there are only
two possible outcomes, the outcome chosen by $h_w$ is simply the one
with the larger discriminant. A classifier that is perfect on the
training data must therefore satisfy the following constraints:
\begin{align*}
	\allowdisplaybreaks
w_1 \cdot 0 + w^T_{-1} \psi((3,5),0) > w_1 \cdot 1 + w^T_{-1} \psi((3,5),1) , \\
w_1 \cdot 5 + w^T_{-1} \psi((4,3),1) > w_1 \cdot 0 + w^T_{-1} \psi((4,3),0) , \\
w_1 \cdot 0 + w^T_{-1} \psi((3,4),0) > w_1 \cdot 2 + w^T_{-1} \psi((3,4),1) .
\end{align*}
This can for example be achieved by setting $w_1 = 1$ and
\begin{equation}
\label{eq:2ndprice}
w_{-1}^T \psi((\theta_2, \theta_3), o_1) = 
\begin{cases}
		-\max(\theta_2, \theta_3) &\text{if $o_1=1$ and} \\
		0 & \text{if $o_1 = 0$.} 
\end{cases}
\end{equation}

Recalling our definition of the price function as $t_w(\theta_{-1}, o_1) =
-(1/w_1) w_{-1}^T \psi(\theta_{-1}, o_1)$, we see that this choice of $w$
and $\psi$ corresponds to the second-price payment rule. We will see
in the next section that this relationship is not a
coincidence.\footnote{In practice, we are limited in the machine
  learning framework to hypotheses that are linear in
  $\psi((\theta_2,\theta_3),o_1)$, and will not be able to guarantee 
that~\eqref{eq:2ndprice} holds exactly. In
  \secref{subsubsec:the-psi-function} we will see, however, that
  certain choices of $\psi$ allow for very complex hypotheses that can
  closely approximate arbitrary functions.}

\subsection{Perfect Classifiers and Implementable Outcome Rules}
\label{sec:regret}

We now formally establish a connection between implementable outcome
rules and perfect classifiers.
\begin{theorem}
	Let $(g,p)$ be a strategyproof mechanism with an agent
        symmetric outcome rule $g$, and let $t_1$ be the corresponding
        price function. Then, a perfect admissible classifier $h_w$
        for partial outcome rule $g_1$ exists if
        $\arg\max_{o_1\in\Omega_1}\left(v_1(\theta_1, o_1) -
          t_1(\theta_{-1},o_1))\right)$ is unique.
\end{theorem}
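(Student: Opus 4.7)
The plan is to construct a specific admissible classifier directly from the price function $t_1$ guaranteed by strategyproofness, and then verify that it perfectly predicts $g_1$ using the characterization of strategyproofness given in equations~(\ref{eq:1}) and~(\ref{eq:2}).

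First, I would unpack what strategyproofness buys us. Since $(g,p)$ is strategyproof with price function $t_1$, equation~(\ref{eq:2}) gives, for every $\theta\in\Theta$,
\[
g_1(\theta)\in\arg\max_{o_1'\in\Omega_1}\bigl(v_1(\theta_1,o_1')-t_1(\theta_{-1},o_1')\bigr).
\]
Agent symmetry is used only to justify working with the partial rule $g_1$ and the single price function $t_1$; the real engine of the proof is this inequality. The uniqueness hypothesis in the statement ensures that the argmax is a singleton, so $g_1(\theta)$ equals that unique maximizer.

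Second, I would exhibit an admissible $(w,\psi)$ whose discriminant function literally equals $v_1(\theta_1,o_1)-t_1(\theta_{-1},o_1)$. The admissibility definition permits any feature map $\psi:\Theta_{-1}\times\Omega_1\to\reals^m$ with $m\in\nats\cup\{\infty\}$, so the cleanest construction is to take $m=1$, set $\psi(\theta_{-1},o_1)=-t_1(\theta_{-1},o_1)$, and choose weights $w_1=1$ and $w_{-1}=1$. Admissibility is then satisfied since $w_1\in\reals_{>0}$, and the resulting discriminant is
\[
f_w(\theta,o_1)=w_1 v_1(\theta_1,o_1)+w_{-1}^T\psi(\theta_{-1},o_1)=v_1(\theta_1,o_1)-t_1(\theta_{-1},o_1).
\]

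Third, combining the two observations gives that for every $\theta$,
\[
h_w(\theta)=\arg\max_{o_1\in\Omega_1}f_w(\theta,o_1)=\arg\max_{o_1\in\Omega_1}\bigl(v_1(\theta_1,o_1)-t_1(\theta_{-1},o_1)\bigr)=g_1(\theta),
\]
where the last equality uses the uniqueness assumption. Hence $h_w$ is a perfect classifier for $g_1$, completing the proof.

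There is no real obstacle here: once one notices that the ``value minus price'' form of strategyproof mechanisms in~(\ref{eq:2}) matches the admissibility template exactly, the theorem reduces to choosing the trivial one-dimensional feature map that encodes $-t_1$ directly. The only subtle point worth highlighting is the role of the uniqueness hypothesis, which is needed to turn the set-valued $\arg\max$ of the strategyproofness characterization into a well-defined single-valued classifier that agrees with $g_1$ pointwise.
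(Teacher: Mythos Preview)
Your proposal is correct and follows essentially the same approach as the paper: both invoke the characterization~(\ref{eq:2}) to place $g_1(\theta)$ in the argmax of $v_1(\theta_1,o_1)-t_1(\theta_{-1},o_1)$, construct the admissible discriminant $f_{(1,1)}(\theta,o_1)=v_1(\theta_1,o_1)-t_1(\theta_{-1},o_1)$ by taking $-t_1$ as a one-dimensional feature map with unit weights, and then appeal to uniqueness to conclude that the classifier agrees with $g_1$.
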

\begin{proof}
	By the first characterization of strategyproof mechanisms, $g$
	must select an outcome that maximizes the utility of agent~$1$
	at the current prices, \ie
\[
	g_1(\theta) \in \arg\max_{\mathclap{o_1\in\Omega_1}} ( v_1(\theta_i, o_1) -
  t_1(\theta_{-1},o_1)) .
\]
Consider the admissible discriminant
$f_{(1,1)}(\theta,o_1)=v_1(\theta_1, o_1)-t_1(\theta_{-1},o_1)$, which
uses the price function $t_1$ as its feature map. Clearly, the
corresponding classifier $h_{(1,1)}$ maximizes the same quantity as
$g_1$, and the two must agree if there is a unique maximizer.
\end{proof}

The relationship also works in the opposite direction: a perfect,
admissible classifier $h_w$ for outcome rule $g$ can be used to
construct a payment rule that turns $g$ into a strategyproof
mechanism.
\begin{theorem} \label{thm:exact-implies-sp}
	Let $g$ be an agent symmetric outcome rule,
        $h_w:\Theta\rightarrow\Omega_1$ an admissible classifier, and
        $p_w$ the payment rule corresponding to $h_w$. If $h_w$ is a
        perfect classifier for the partial outcome rule $g_1$, then
        the mechanism $(g,p_w)$ is strategyproof.
\end{theorem}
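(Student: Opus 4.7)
The plan is to invoke the characterization of strategyproofness from equations~\eqref{eq:1} and~\eqref{eq:2}, which reduces the claim to showing two things: that each agent's payment depends only on the others' reports and on the chosen outcome, and that for every agent the outcome maximizes value minus price at the induced price function. Condition~\eqref{eq:1} is essentially built into the construction of $p_w$, so the real content lives in condition~\eqref{eq:2}.

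First, I would check condition~\eqref{eq:1} directly. By definition, $p_{w,i}(\theta) = t_w(\theta_{-i}, g_i(\theta))$, which is a function purely of $\theta_{-i}$ and $g_i(\theta)$. So taking the price function for the characterization to be $t_i = t_w$ for every $i$, condition~\eqref{eq:1} holds.

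Next, I would address condition~\eqref{eq:2}. Focusing on agent~$1$, the admissibility of $h_w$ and the fact that $w_1 \in \reals_{>0}$ let me divide the discriminant through by $w_1$ without changing the argmax, yielding
\[
h_w(\theta) \in \arg\max_{o_1 \in \Omega_1} \Bigl( v_1(\theta_1, o_1) + \tfrac{1}{w_1} w_{-1}^T \psi(\theta_{-1}, o_1) \Bigr) = \arg\max_{o_1 \in \Omega_1} \bigl( v_1(\theta_1, o_1) - t_w(\theta_{-1}, o_1) \bigr),
\]
using the definition of the associated price function $t_w$. Because $h_w$ is a perfect classifier for $g_1$, we have $g_1(\theta) = h_w(\theta)$ for every $\theta$, so $g_1(\theta)$ lies in this same argmax set, which is exactly condition~\eqref{eq:2} for agent~$1$ with the price function $t_w$.

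Finally, I would extend the conclusion from agent~$1$ to all agents. Since $g$ is agent symmetric and $p_w$ is defined from the single price function $t_w$ through the symmetric construction spelled out in the preliminaries, the agent-$1$ argument applies uniformly to every $i \in N$, yielding~\eqref{eq:2} for all agents simultaneously. Together with~\eqref{eq:1}, this establishes strategyproofness of $(g, p_w)$. The only subtle ingredient is the use of $w_1 > 0$ to preserve the argmax when normalizing the discriminant, and invoking agent symmetry to avoid a per-agent repetition of the argument; both are immediate from definitions already in place.
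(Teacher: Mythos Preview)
Your argument is correct, but it takes a slightly different route from the paper. You verify the two conditions of the characterization~\eqref{eq:1}--\eqref{eq:2} directly: the payment's agent-independence is immediate from the definition of $p_w$, and the optimization condition follows from $g_1(\theta)=h_w(\theta)\in\arg\max_{o_1}(v_1(\theta_1,o_1)-t_w(\theta_{-1},o_1))$ after dividing the discriminant by $w_1>0$. The paper instead first establishes \lemref{lem:regret}, which expresses agent~$1$'s ex post regret in $(g,p_w)$ as $\frac{1}{w_1}\bigl(\max_{o_1\in\Omega_1(\theta_{-1})}f_w(\theta,o_1)-f_w(\theta,g_1(\theta))\bigr)$, and then observes that a perfect classifier forces this quantity to vanish. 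Your route is marginally more direct for this particular theorem; the paper's detour through the regret lemma pays off because \lemref{lem:regret} is reused verbatim in the proofs of the approximate results (Theorems~3 and~4). Both arguments leave the extension from agent~$1$ to all agents to agent symmetry, which you make explicit and the paper leaves implicit.
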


We prove this result by expressing the regret of an agent in mechanism
$(g,p_w)$ in terms of the discriminant function $f_w$.
Let $\Omega_i(\theta_{-i})\subseteq\Omega_i$ denote the set of partial
outcomes for agent $i$ that can be obtained under~$g$ given reported
types $\theta_{-i}$ from all agents but~$i$, keeping the dependence on
$g$ silent for notational simplicity.
\begin{lemma} \label{lem:regret}
	Suppose that agent~$1$ has type $\theta_1$ and that the other
	agents report types $\theta_{-1}$. Then the regret of
	agent~$1$ for bidding truthfully in mechanism $(g,p_{w})$ is
\[
	\frac{1}{w_1}\bigl(\max_{o_1\in\Omega(\theta_{-1})} f_{w}(\theta, o_1) -
f_{w}(\theta, g_1(\theta))\bigr).
\]
\end{lemma}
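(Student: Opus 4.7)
The plan is to unfold the definition of regret in terms of utilities, substitute the payment rule $p_w$, and then translate the resulting ``value-minus-price'' expression into the discriminant function $f_w$ by using the definition of the associated price function $t_w$.

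First I would write
\[
\regret_1(\theta_1,\theta_{-1}) = \max_{\theta'_1\in\Theta_1} \bigl[v_1(\theta_1, g_1(\theta'_1,\theta_{-1})) - p_{w,1}(\theta'_1,\theta_{-1})\bigr] - \bigl[v_1(\theta_1, g_1(\theta)) - p_{w,1}(\theta)\bigr],
\]
and observe that by construction $p_{w,1}(\theta'_1,\theta_{-1}) = t_w(\theta_{-1}, g_1(\theta'_1,\theta_{-1}))$, so the payment depends on agent~1's report only through the induced outcome $g_1(\theta'_1,\theta_{-1})$. Consequently, as $\theta'_1$ ranges over $\Theta_1$, the bracketed quantity inside the max ranges over exactly those values of the form $v_1(\theta_1,o_1)-t_w(\theta_{-1},o_1)$ with $o_1\in\Omega_1(\theta_{-1})$. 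This rewrites the first term as $\max_{o_1\in\Omega_1(\theta_{-1})}\bigl(v_1(\theta_1,o_1)-t_w(\theta_{-1},o_1)\bigr)$ and the second as $v_1(\theta_1,g_1(\theta))-t_w(\theta_{-1},g_1(\theta))$.

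Next I would substitute the definition $t_w(\theta_{-1},o_1)=-(1/w_1)w_{-1}^T\psi(\theta_{-1},o_1)$, giving
\[
v_1(\theta_1,o_1)-t_w(\theta_{-1},o_1) = \frac{1}{w_1}\bigl(w_1 v_1(\theta_1,o_1) + w_{-1}^T\psi(\theta_{-1},o_1)\bigr) = \frac{1}{w_1} f_w(\theta,o_1).
\]
Because $w_1>0$ by admissibility, the factor $1/w_1$ can be pulled outside the max. Subtracting the two terms yields the claimed expression.

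The only subtle point is the step identifying the range of $v_1(\theta_1,g_1(\theta'_1,\theta_{-1}))-t_w(\theta_{-1},g_1(\theta'_1,\theta_{-1}))$ over $\theta'_1$ with the range over $o_1\in\Omega_1(\theta_{-1})$; this is immediate from the definition of $\Omega_1(\theta_{-1})$ as the image of $\theta'_1\mapsto g_1(\theta'_1,\theta_{-1})$, and since the payment depends on $\theta'_1$ only through this image, no additional assumption (such as consumer sovereignty) is needed. Everything else is direct algebra, so I do not anticipate a genuine obstacle.
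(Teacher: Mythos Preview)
Your proposal is correct and follows essentially the same route as the paper: unfold the regret definition, use the fact that $p_{w,1}(\theta'_1,\theta_{-1})=t_w(\theta_{-1},g_1(\theta'_1,\theta_{-1}))$ to replace the max over $\theta'_1$ by a max over $o_1\in\Omega_1(\theta_{-1})$, substitute the definition of $t_w$, and factor out $1/w_1$ using $w_1>0$. Your explicit remark that the payment depends on $\theta'_1$ only through the induced outcome, and that this is why the reparametrization of the max is an equality without consumer sovereignty, is exactly the justification the paper leaves implicit.
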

\begin{proof}
	We have
\begin{align*}
	\regret_{1}(\theta) 
	&= \hspace{0.35cm} \max_{\theta'_1 \in \Theta_1} \hspace{0.35cm} \bigl(v_1(\theta_1,
  g_1(\theta'_1,\theta_{-1})) - p_{w,1}(\theta'_1,\theta_{-1})\bigr) -
  \bigl(v_1(\theta_1, g_1(\theta))-p_{w,1}(\theta)\bigr) \\
	&= \max_{o_1\in \Omega_1(\theta_{-1})} \bigl(v_1(\theta_1,
  o_1)-t_{w}(\theta_{-1},o_1)\bigr) - \bigl(v_1(\theta_1, g_1(\theta))-t_{w}(\theta_{-1},g_1(\theta))\bigr) \\
  	&= \max_{o_1\in \Omega_1(\theta_{-1})} \bigl(v_1(\theta_1,
  o_1)+\frac{1}{w_1}w_{-1}^T \psi(\theta_{-1},o_1)\bigr) -
	\bigl(v_1(\theta_1, g_1(\theta))+\frac{1}{w_1}w_{-1}^T \psi(\theta_{-1},g_1(\theta))\bigr) \\
	&= \hspace{0.4cm} \frac{1}{w_1} \bigl(\max_{o_1\in \Omega_1(\theta_{-1})} f_{w}(\theta, o_1) - f_{w}(\theta,g_1(\theta))\bigr). \tag*{\raisebox{-1ex}{\qedhere}}
\end{align*} 
\end{proof}

\begin{proof}[Proof of \thmref{thm:exact-implies-sp}]
If $h_w$ is a perfect classifier, then the discriminant function $f_w$
satisfies $\arg\max_{o_1 \in \Omega_1}f_w(\theta,o_1) = g_1(\theta)$
for every $\theta\in\Theta$. Since
$g_1(\theta)\in\Omega_1(\theta_{-1})$, we thus have that
$\max_{o_1\in\Omega_1(\theta_{-1})} f_w(\theta, o_1) =
f_w(\theta,g_1(\theta))$. By \lemref{lem:regret}, the regret of
agent~$1$ for bidding truthfully in mechanism $(g,p_{w})$ is always
zero, which means that the mechanism is strategyproof.
\end{proof}

It bears emphasis that classifier $h_w$ is only used to derive the
payment rule $p_w$, while the outcome is still selected according to
$g$. In principle, classifier $h_w$ could be used to obtain an agent
symmetric outcome rule $g_w$ and, since $h_w$ is a perfect classifier
for itself, a strategyproof mechanism $(g_w,p_w)$. Unfortunately,
outcome rule $g_w$ is not in general feasible. Mechanism $(g,p_w)$, on
the other hand, is not strategyproof when $h_w$ fails to be a perfect 
classifier for $g$. While payment rule $p_w$ always satisfies the
agent-independence property~\eqref{eq:1} required
for strategyproofness, the ``optimization'' property~\eqref{eq:2} might be violated when $h_w(\theta)\neq
g_1(\theta)$.

\subsection{Approximate Classification and Approximate Strategyproofness}
\label{sec:minimize-generalization}

A perfect admissible classifier for outcome rule $g$ leads to a
payment rule that turns $g$ into a strategyproof mechanism. We now
show that this result extends gracefully to situations where no such
payment rule is available, by relating the \emph{expected} ex post
regret of a mechanism $(g,p)$ to a measure of the generalization error
of a classifier for~$g$.

Fix a feature map $\psi$, and denote by $\mathcal{H}_\psi$ the space
of all admissible classifiers with this feature map. The
\emph{discriminant loss} of a classifier $h_w\in\mathcal{H}_\psi$ with
respect to a type profile $\theta$ and an outcome $o_1\in\Omega_1$ is
given by
\[
	\Delta_w(o_1,\theta) = \frac{1}{w_1} \bigl(f_w(\theta,h_w(\theta)) - f_w(\theta,o_1)\bigr) .
\]
Intuitively the discriminant loss measures how far, in terms of the normalized
discriminant, $h_w$ is from predicting the correct outcome for type
profile~$\theta$, assuming the correct outcome is $o_1$. Note that
$\Delta(o_1,\theta)\geq 0$ for all $o_1\in\Omega_1$ and
$\theta\in\Theta$, and $\Delta(o_1,\theta)=0$ if
$o_1=h_w(\theta)$. Note further that $h_w(\theta)=h_{w'}(\theta)$ does
not imply that $\Delta_w(o_1,\theta)=\Delta_{w'}(o_1,\theta)$ for all
$o_1\in\Omega_1$: even if two classifiers predict the same outcome,
one of them may still be closer to predicting the correct outcome
$o_1$.

The \emph{generalization error} of classifier $h_w\in\mathcal{H}_\psi$
with respect to a type distribution $D$ and a partial outcome rule
$g_1:\Theta\rightarrow\Omega_1$ is then given by
\[
	R_{w}(D,g) = \int_{\theta\in\Theta}\Delta_w\bigl(g_1(\theta),\theta\bigr) D(\theta)d\theta.
\]
The following result establishes a connection between the
generalization error and the expected ex post regret of the
corresponding mechanism.
\begin{theorem}
Consider an outcome rule $g$, a space ${\mathcal H}_\psi$ of admissible classifiers, and a type distribution $D$.  Let $h_{w^*}
\in\mathcal{H}_\psi$ be a classifier that minimizes generalization
error with respect to $D$ and $g$ among all classifiers in
$\mathcal{H}_\psi$. Then the following holds:
\begin{enumerate}
	\item If $g$ satisfies consumer sovereignty, then
	$(g,p_{w^*})$ minimizes expected ex post regret with respect
	to $D$ among all mechanisms $(g, p_w)$ corresponding to
	classifiers \mbox{$h_w\in\mathcal{H}_\psi$}.  \label{itm:errgt1}
	\item Otherwise, $(g,p_{w^*})$ minimizes an upper bound on
          expected ex post regret with respect to $D$ amongst all
          mechanisms $(g,p_w)$ corresponding to classifiers
          $h_w\in\mathcal{H}_\psi$.  \label{itm:errgt2}
\end{enumerate}
\end{theorem}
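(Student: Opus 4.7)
The plan is to relate the expected ex post regret of $(g,p_w)$ directly to the generalization error $R_w(D,g)$, via the formula for regret given in \lemref{lem:regret}. By agent symmetry, it suffices to work with agent~$1$: the expected ex post regret of any agent in $(g,p_w)$ equals $\expect_{\theta\sim D}[\regret_1(\theta_1,\theta_{-1})]$.

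First I would rewrite the regret of agent~$1$ using \lemref{lem:regret} and compare it to the discriminant loss appearing in $R_w(D,g)$. Expanding the definitions,
\[
	\regret_1(\theta) = \frac{1}{w_1}\bigl(\max_{o_1\in\Omega_1(\theta_{-1})} f_w(\theta,o_1) - f_w(\theta,g_1(\theta))\bigr),
\]
while
\[
	\Delta_w(g_1(\theta),\theta) = \frac{1}{w_1}\bigl(f_w(\theta,h_w(\theta)) - f_w(\theta,g_1(\theta))\bigr) = \frac{1}{w_1}\bigl(\max_{o_1\in\Omega_1} f_w(\theta,o_1) - f_w(\theta,g_1(\theta))\bigr).
\]
The two expressions differ only in the set over which the maximum is taken, namely $\Omega_1(\theta_{-1})$ (the outcomes reachable by agent~$1$ given $\theta_{-1}$) versus the full outcome set~$\Omega_1$.

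For part~\ref{itm:errgt1}, I would invoke consumer sovereignty: for every $o_1\in\Omega_1$ and every $\theta_{-1}\in\Theta_{-1}$, there exists $\theta'_1\in\Theta_1$ with $g_1(\theta'_1,\theta_{-1})=o_1$, so $\Omega_1(\theta_{-1})=\Omega_1$. Hence $\regret_1(\theta)=\Delta_w(g_1(\theta),\theta)$ pointwise, and integrating against $D$ yields $\expect_{\theta\sim D}[\regret_1(\theta)] = R_w(D,g)$. Since this identity holds for every $h_w\in\mathcal{H}_\psi$, the minimizer $h_{w^*}$ of $R_w(D,g)$ also minimizes expected ex post regret among all mechanisms $(g,p_w)$ with $h_w\in\mathcal{H}_\psi$.

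For part~\ref{itm:errgt2}, I would drop consumer sovereignty and simply observe that $\Omega_1(\theta_{-1})\subseteq\Omega_1$, so taking the max over the smaller set can only decrease the value. Hence $\regret_1(\theta)\le\Delta_w(g_1(\theta),\theta)$ pointwise; integrating against $D$ gives $\expect_{\theta\sim D}[\regret_1(\theta)]\le R_w(D,g)$. Thus the minimizer $h_{w^*}$ of $R_w(D,g)$ minimizes an upper bound on expected ex post regret over $\mathcal{H}_\psi$, as claimed.

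The only conceptual subtlety is recognizing why consumer sovereignty is exactly the hypothesis needed to upgrade the inequality $\regret_1\le\Delta_w$ to an equality; everything else is bookkeeping with \lemref{lem:regret} and the definition of $R_w$.
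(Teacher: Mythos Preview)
Your proposal is correct and follows essentially the same approach as the paper: both arguments compare the regret formula from \lemref{lem:regret} to the discriminant loss, observe that the only difference is whether the maximum ranges over $\Omega_1(\theta_{-1})$ or all of $\Omega_1$, and then use consumer sovereignty (respectively, the containment $\Omega_1(\theta_{-1})\subseteq\Omega_1$) to obtain equality (respectively, the upper bound). Your write-up is slightly more explicit about the integration step and the role of agent symmetry, but the substance is the same.
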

\begin{proof}
	For the second property, observe that
	\begin{align*}
	\Delta_w(g_1(\theta),\theta)&=\frac{1}{w_1}\bigl(f_w(\theta,h_w(\theta))-f_w(\theta,g_1(\theta))\bigr) \\
	& = \frac{1}{w_1}\bigl(\max_{o_1\in\Omega_1} f_w(\theta,o_1) -
	f_w(\theta,g_1(\theta))\bigr) \\ 
	&\geq \frac{1}{w_1}\bigl(\max_{o_1 \in \Omega(\theta_{-1})} f_{w}(\theta,o_1) -
f_{w}(\theta,g_1(\theta))\bigr) 
	= \regret_1(\theta), 
	\end{align*}
	where the last equality holds by \lemref{lem:regret}. If $g$
        satisfies consumer sovereignty, then the inequality holds with
        equality, and the first property follows as well.
\end{proof}
Minimization of expected regret itself, rather than an upper bound, can also be achieved if the learner has access to the set $\Omega_1(\theta_{-1})$ for every $\theta_{-1}\in\Theta_{-1}$.

\section{A Solution using Structural Support Vector Machines}
\label{sec:sol-via-svm}

In this section we discuss the method of {\em structural support vector 
machines} (structural SVMs)~\citep{TJHA05a,JFY09a}, and show how it can be 
adapted for the purpose of learning classifiers with admissible discriminant 
functions.

\subsection{Structural SVMs}
\label{subsec:structural-svms}

Given an input space $X$, a discrete output space $Y$, a target
function $h^\ast:X \rightarrow Y$, and a set of \emph{training
  examples} $\{ (x^1, h^\ast(x^1)), \ldots, (x^\ell, h^\ast(x^\ell))
\} = \{ (x^1, y^1), \ldots, (x^\ell, y^\ell)\}$, structural SVMs learn
a multi-class classifier $h$ that on input $x\in X$ selects an output $y\in Y$
that maximizes $f_w(x, y) = w^T \psi(x, y)$.
For a given feature map $\psi$, the training problem is to find a vector
$w$ for which $h_w$ has low generalization error.

Given examples $\{(x^1, y^1), \ldots, (x^\ell, y^\ell)\}$,
training is achieved by solving the following convex optimization
problem:
\begin{align*}
	\min_{\mathclap{w,\xi\geq 0}}\;\;& \frac{1}{2}w^Tw +
	\frac{C}{\ell}\sum_{k=1}^\ell \xi^k \tag{Training Problem 1} \\  
	\text{s.t.}\;\; 
	& w^T\big(\psi(x^k,y^k)-\psi(x^k,y)\big) \geq {\mathcal L}(y^k,y)-\xi^k 
	\quad\text{for all $k=1,\ldots,\ell$, $y\in Y$} \\
	&\xi^k \geq 0 \quad\text{for all $k=1,\ldots,\ell$.}
\end{align*}
The goal is to find a weight vector $w$ and slack variables
$\xi^k$ such that the objective function is minimized while satisfying the
constraints. The learned weight vector $w$ parameterizes the
discriminant function $f_{w}$, which in turn defines the 
classifier $h_{w}$. The $k$th constraint states that the
value of the discriminant function on $(x^k, y^k)$ should exceed the
value of the discriminant function on $(x^k, y)$ by at least
${\mathcal L}(y^k, y)$, where ${\mathcal L}$ is a loss function that
penalizes misclassification, with ${\mathcal L}(y,y)=0$ and 
${\mathcal L}(y,y')\geq 0$ for all $y,y'\in Y$.
We generally use a $0/1$ loss function, but consider an alternative in
\secref{sec:IRLoss} to improve ex post IR properties.
Positive values for the slack variables $\xi^k$ allow the weight
vector to violate some of the constraints.

The other term in the objective, the squared norm of $w$, penalizes
scaling of $w$. This is necessary because scaling of $w$ can arbitrarily 
increase the margin
between $f_w(x^k, y^k)$ and $f_w(x^k, y)$ and make the constraints
easier to satisfy. Smaller values of $w$, on the other hand, increases 
the ability of the learned classifier to generalize by decreasing the 
propensity to over-fit to the training data.
Parameter $C$ is therefore a regularization parameter: larger values of $C$ 
encourage small $\xi^k$ and larger $w$, such that more points 
are classified correctly, but with a smaller margin.

\subsubsection{The Feature Map and the Kernel Trick}
\label{sec:kerneltrick}
\label{subsubsec:the-psi-function}

Given a feature map $\psi$, the \emph{feature vector} $\psi(x,y)$ for 
$x\in X$ and $y\in Y$ provides an alternate representation of the 
input-output pair $(x,y)$. It is useful to consider feature maps $\psi$ for 
which $\psi(x,y)=\phi(\chi(x,y))$, where $\chi:X\times Y\rightarrow\reals^s$ 
for some $s\in{\mathbb N}$ is an {\em attribute map} that combines $x$ and $y$ 
into a single {\em attribute vector} $\chi(x,y)$ compactly representing the 
pair, and $\phi:\reals^s\rightarrow\reals^m$ for $m>s$ maps the attribute 
vector to a higher-dimensional space in a non-linear way. In this way, SVMs 
can achieve non-linear classification in the original space.

While we work hard to keep $s$ small, the so-called {\em kernel trick} means 
that we do not have the same problem with $m$: it turns out that in the dual
of Training Problem~1, $\psi(x,y)$ only appears in an inner product of the 
form $\langle\psi(x,y), \psi (x', y')\rangle$, or, for a
decomposable feature map, $\langle \phi(z), \phi(z') 
\rangle$ where $z=\chi(x,y)$ and $z'=\chi(x',y')$. For computational 
tractability it therefore 
suffices that this inner product can be computed efficiently, and the 
``trick'' is to choose $\phi$ such that 
$\langle\phi(z),\phi(z')\rangle=K(z,z')$ for a simple closed-form function 
$K$, known as the {\em kernel}.

In this paper, we consider {\em polynomial kernels} $\Kpolyd$,
parameterized by $d\in\mathbb{N}^+$, and {\em radial basis function (RBF) 
kernels} $\Krbf$, parameterized by $\gamma=1/(2\sigma^2)$ for $\sigma\in \reals^+$:
\begin{align*}
&\Kpolyd(z,z') = (z \cdot z')^d, \\
&\Krbf(z,z') = \exp\left( - \gamma \left( \lVert z \rVert^2 + \lVert z' \rVert^2 - 2 z \cdot z' \right) \right).
\end{align*}
Both polynomial and RBF kernels use the standard inner product of their
arguments, so their efficient computation requires that $\chi(x,y) \cdot\chi(x,y')$ can be computed efficiently.

\subsubsection{Dealing with an Exponentially Large Output Space}

Training Problem~1 has $\Omega(|Y|\ell)$ constraints, where $Y$ is the output
space and $\ell$ the number of training instances, and enumerating all of them
is computationally prohibitive when $Y$ is large. \citet{JFY09a} address this
issue for structural SVMs through constraint generation: starting from an empty
set of constraints, this technique iteratively adds a constraint that is
maximally violated by the current solution until that violation is below a
desired threshold $\epsilon$. \citeauthor{JFY09a} show that this will happen after no
more than $\bigO(\frac{C}{\epsilon})$ iterations, each of which requires
$\bigO(\ell)$ time and memory. However, this approach assumes the
existence of an efficient separation oracle, which given a weight vector $w$ and
an input $x$ finds an output $y \in \arg\max_{y′\in Y} f_w(x,y′)$. The existence
of such an oracle remains an open question in application to combinatorial
auctions; see \secref{subsubsec:exp-large-y} for additional discussion.

\subsubsection{Required Information}
\label{subsec:required-info}

In summary, the use of structural SVMs requires specification of the following:
\begin{enumerate}
  \item The input space $X$, the discrete output space $Y$, and
  examples of input-output pairs.
  \item An attribute map $\chi: X \times Y \rightarrow \mathbb{R}^s$.
  This function generates an attribute vector that combines the input
  and output data into a single object.  \label{required-feature-map}
  \item A kernel function $K(z,z')$, typically chosen from a
  well-known set of candidates, \eg polynomial or RBF.  The kernel
  implicitly calculates the inner product $\langle \phi(z), \phi(z')
  \rangle$, \eg between a mapping of the inputs into a high
  dimensional space.
  \item If the space $Y$ is prohibitively large, a routine that allows
  for efficient separation, \ie a function that computes
  $\arg\max_{y \in Y} f_w(x, y)$ for a given $w, x$.
\end{enumerate}
In addition, the user needs to stipulate particular
training parameters, such as the regularization parameter $C$, and the
kernel parameter $\gamma$ if the RBF kernel is being used.

\subsection{Structural SVMs for Mechanism Design}
\label{subsec:svms-for-md}

We now specialize structural SVMs such that their learned discriminant
function will manifest as a payment rule for a given symmetric
outcome function $g$ and distribution $D$.  In this application, the
input domain $X$ is the space of type profiles $\Theta$, and the
output domain $Y$ is the space $\Omega_1$ of outcomes for agent~$1$.  Thus
we construct training data by sampling $\theta \sim D$ and applying
$g$ to these inputs: $\{(\theta^1, g_1(\theta^1)), \ldots,
(\theta^\ell, g_1(\theta^\ell))\} = \{ (\theta^1, o_1^1), \ldots,
(\theta^\ell, o_1^\ell) \}$.
For admissibility of the learned hypothesis $h_w(\theta) = \arg\max_{o_1 \in
  \Omega_1} w^T\psi(\theta,o_1)$, we require that
\begin{align*} %
\psi(\theta,o_1) = (v_1(\theta_1, o_1), \psi'(\theta_{-1}, o_1))
\end{align*}
When learning payment rules, we therefore use an attribute map
$\chi':\Theta_{-1} \times \Omega_1 \rightarrow \reals^s$ rather than
$\chi:\Theta \times \Omega_1 \rightarrow \reals^s$, and the
kernel $\phi'$ we specify will only be applied to the output of
$\chi'$. This results in the following more specialized training problem:
\begin{align*}
	\min_{\mathclap{w,\xi\geq 0}}\;\;& \frac{1}{2}w^Tw +
	\frac{C}{\ell}\sum_{k=1}^\ell \xi^k \tag{Training Problem 2} \\
	\text{s.t.}\;\; 
	& (w_1 v_1(\theta_1^k,o_1^k) + w_{-1}^T \psi'(\theta_{-1}^k, o_1^k)) - (w_1 v_1(\theta_1^k,o_1) + w_{-1}^T \psi'(\theta_{-1}^k, o_1)) \geq
    {\mathcal L}(o_1^k, o_1) -\xi^k \\
	& \phantom{(w_1 v_1(\theta_1^k,o_1^k) + w_{-1}^T \psi'(\theta_{-1}^k, o_1^k)) - (w_1 v_1(\theta_1^k,o_1) +}\;\;
	\text{for all $k=1,\dots,\ell$, $o_1\in\Omega_1$} \\
	&\xi^k \geq 0 \quad\text{for all $k=1,\dots,\ell$.}
\end{align*}

If $w_1 > 0$ then the weights $w$ together with the feature map
$\psi'$ define a price function $t_w(\theta_{-1},o_1) = - (1/w_1)
w_{-1}^T \psi'(\theta_{-1},o_1)$ that can be used to define payments
$p_w(\theta)$, as described in \secref{sec:payment-rule}.
In this case, we can also relate the regret in the induced mechanism
$(g,p_w)$ to the classification error as described in
\secref{sec:regret}.

\begin{theorem}
\label{thm:slack-to-regret}
Consider training data $\{ (\theta^1, o_1^1), \ldots, (\theta^\ell,
o_1^\ell) \}$.  Let $g$ be an outcome function such that
$g_1(\theta^k) = o_1^k$ for all $k$.  Let $w, \xi^k$ be the weight
vector and slack variables output by Training Problem 2, with $w_1 >
0$.  Consider corresponding mechanism $(g, p_w)$.  For each
$\theta^k$,
\begin{equation*}
		\regret_1(\theta^k) \leq \frac{1}{w_1} \xi^k
	\end{equation*}
\end{theorem}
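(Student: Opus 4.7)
The plan is to combine Lemma~\ref{lem:regret} with the explicit constraint from Training Problem~2 applied at each training example $\theta^k$. By Lemma~\ref{lem:regret}, since $g_1(\theta^k)=o_1^k$, we have
\[
\regret_1(\theta^k) = \frac{1}{w_1}\bigl(\max_{o_1\in\Omega_1(\theta_{-1}^k)} f_w(\theta^k,o_1) - f_w(\theta^k,o_1^k)\bigr),
\]
so it suffices to show that for every $o_1$ in the (possibly restricted) feasible set $\Omega_1(\theta_{-1}^k)$, we have $f_w(\theta^k,o_1)-f_w(\theta^k,o_1^k) \le \xi^k$.

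Next, I would unpack the $k$th constraint of Training Problem~2. For every $o_1\in\Omega_1$, the constraint reads
\[
\bigl(w_1 v_1(\theta_1^k,o_1^k)+w_{-1}^T\psi'(\theta_{-1}^k,o_1^k)\bigr) - \bigl(w_1 v_1(\theta_1^k,o_1)+w_{-1}^T\psi'(\theta_{-1}^k,o_1)\bigr) \ge \mathcal{L}(o_1^k,o_1)-\xi^k.
\]
Because the admissible feature map satisfies $\psi(\theta,o_1)=(v_1(\theta_1,o_1),\psi'(\theta_{-1},o_1))$, the left-hand side is exactly $f_w(\theta^k,o_1^k)-f_w(\theta^k,o_1)$. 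Rearranging and using $\mathcal{L}(o_1^k,o_1)\ge 0$ yields $f_w(\theta^k,o_1)-f_w(\theta^k,o_1^k)\le \xi^k-\mathcal{L}(o_1^k,o_1)\le\xi^k$ for every $o_1\in\Omega_1$.

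Finally, since $\Omega_1(\theta_{-1}^k)\subseteq\Omega_1$, taking the maximum over $o_1\in\Omega_1(\theta_{-1}^k)$ preserves the bound, giving
\[
\max_{o_1\in\Omega_1(\theta_{-1}^k)} f_w(\theta^k,o_1) - f_w(\theta^k,o_1^k) \le \xi^k.
\]
Dividing by $w_1>0$ and invoking Lemma~\ref{lem:regret} then delivers $\regret_1(\theta^k)\le \xi^k/w_1$.

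There is no real obstacle here; the argument is essentially bookkeeping. The only subtlety worth flagging is the distinction between $\Omega_1$ (over which the SVM constraints range) and $\Omega_1(\theta_{-1}^k)$ (which governs the true regret), together with the nonnegativity of $\mathcal{L}$ that lets us drop the loss term. Both observations only strengthen the bound, so the chain of inequalities goes through cleanly.
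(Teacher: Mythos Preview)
Your proof is correct and takes essentially the same approach as the paper: rearrange the Training Problem~2 constraint to bound $f_w(\theta^k,o_1)-f_w(\theta^k,o_1^k)$ by $\xi^k$ using nonnegativity of $\mathcal{L}$, then invoke Lemma~\ref{lem:regret}. The only cosmetic difference is that you lead with Lemma~\ref{lem:regret} and close with the constraint, whereas the paper does the reverse.
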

\begin{proof} 
Consider input $\theta^k$. The constraints in the training problem
impose that for every outcome $o_1 \in \Omega_1$,
\begin{equation*}
	w_1 v_1(\theta_1^k,o_1^k) + w_{-1}^T \psi'(\theta_{-1}^k,o_1^k) -
    \big(w_1 v_1(\theta_1^k,o_1) + w_{-1}^T \psi'(\theta_{-1}^k, o_1)\big) \geq
    {\mathcal L}(o_1^k, o_1) - \xi^k
\end{equation*}
Rearranging,
\begin{align*}
	  \xi^k &\geq {\mathcal L}(o_1^k, o_1)  +
    \big(w_1 v_1(\theta_1^k,o_1) + w_{-1}^T \psi'(\theta_{-1}^k, o_1)\big) -
    \big(w_1 v_1(\theta_1^k,o_1^k) + w_{-1}^T \psi'(\theta_{-1}^k, o_1^k)\big)\\
\Rightarrow \xi^k& \geq {\mathcal L}(o_1^k, o_1) + f_w(\theta^k, o_1) - f_w(\theta^k, o_1^k)\\
	\intertext{This inequality holds for every $o_1 \in \Omega_1$, so}
 \xi^k& \geq \max_{o_1 \in \Omega_1} \left( {\mathcal L}(o_1^k, o_1) + f_w(\theta^k, o_1) -
 f_w(\theta^k, o_1^k) \right)\\
 & \geq \max_{o_1 \in \Omega_1} \left( f_w(\theta^k, o_1) -
 f_w(\theta^k, o_1^k) \right)\\
	&\geq w_1 \regret_1(\theta^k)
\end{align*}
where the second inequality holds because ${\mathcal L}(o_1^k,
o_1)\geq 0$, and the final inequality follows from
\lemref{lem:regret}. This completes the proof.
\end{proof}

\pd{David suggested to say something about consistency here}

We choose not to enforce $w_1>0$ explicitly in Training Problem~2,
because adding this constraint leads to a dual problem that references
$\psi'$ outside of an inner product and thus makes computation of all
but linear or low-dimensional polynomial kernels prohibitively
expensive.
Instead, in our experiments we simply discard hypotheses where the result of
training is $w_1 \leq 0$.  This is sensible since the discriminant function
value should increase as an agent's value increases, and negative values of
$w_1$ typically mean that the training parameter $C$ or the kernel parameter
$\gamma$ (if the RBF kernel is used) are poorly chosen.  It turns out that $w_1$
is indeed positive most of the time, and for every experiment a majority of the
choices of $C$ and $\gamma$ yield positive $w_1$ values.  For this reason, we do
not expect the requirement that $w_1>0$ to be a problem in
practice.\footnote{For multi-minded combinatorial auctions, $1049/1080> 97\%$ of
the trials had positive $w_1$, for the assignment problem all of the trials did;
see \secref{sec:applying} for details.}

\subsubsection{Payment Normalization}

One issue with the framework as stated is that the payments $p_w$
computed from the solution to Training Problem~2 could be negative.

We solved this problem by normalizing payments, using a \emph{baseline 
outcome} $o_b$: if there exists an outcome $o'$ such that 
$v_1(\theta_1,o')=0$ for every $\theta_1$, this ``null outcome'' is used as 
the baseline; otherwise, we use the outcome with the lowest payment.  Let 
$t_w(\theta_{-1}, o_1)$ be the price function corresponding to the solution 
$w$ to Training Problem~2. Adopting the baseline outcome, the {\em normalized 
payments} 
$t_w'(\theta_{-1}, o_1)$ are defined
as
\[
t_w'(\theta_{-1}, o_1) = \max(0, t_w(\theta_{-1}, o_1) - t_{w}(\theta_{-1},
o_b)) .
\]
Note that $o_b$ is only a function of $\theta_{-1}$, even when there is no 
null outcome, so $t_w'$ is still only a function of
$\theta_{-1}$ and $o_1$.

\subsubsection{Individual Rationality Violation}
\label{subsubsec:ir-fixes}

Even after normalization, the learned payment rule $p_w$ may not satisfy
IR. We offer three solutions to this problem, which can be used in
combination.

\paragraph{Payment offsets}
One way to decrease the rate of IR violation is to add a payment
offset, which decreases all payments (for all type reports) by a given
amount.  We apply this payment offset to all bundles other than $o_b$; as 
with payment normalization, the adjusted payment is set to $0$ if it is 
negative.\footnote{It is again crucial that $o_b$
  depends only on $\theta_{-1}$, so that the payment remains
  independent of $\theta_1$ given~$o_1$.}
Note that payment offsets decrease IR violation, but may increase regret.
For instance, suppose there are only two outcomes $o_{11}, o_{12}$, where $o_{12}$
is the null outcome. Suppose agent 1 values $o_{11}$ at 5 and receives
the null outcome if he reports truthfully. Suppose further that
payments $t_w$ are 7 for $o_{11}$ and 0 for the null outcome.  With no
payment offset, the agent experiences no regret, since he receives
utility 0 from the null outcome, but negative utility from $o_{11}$.
However, if the payment offset is greater than 2, the agent's regret
becomes positive (assuming consumer sovereignty) because he could have
reported differently and received $o_{11}$ and received positive
utility.

\paragraph{Adjusting the loss function ${\mathcal L}$}
\label{sec:IRLoss}
We incur an IR violation when there is a null outcome $o_{\mathit{null}}$ 
such that $g_1(\theta) \ne o_{\mathit{null}}$ and 
$f_w(\theta,o_{\mathit{null}}) > f_w(\theta,g_1(\theta))$ for some type 
$\theta$, assuming truthful reports.
This happens because $f_w(\theta, o_1)$ is a scaled version of the
agent's utility for outcome $o_1$ under payments $p_w$.  If the
utility for the null outcome is greater than the utility for
$g_1(\theta)$, then the payment $t_w(\theta_{-1}, g_1(\theta))$ must
be greater than $v_1(\theta_1, g_1(\theta))$, causing an IR violation. 
We can discourage these types of errors by modifying the
constraints of Training Problem~2: when $o_1^k \ne o_{\mathit{null}}$ 
and $o_1=o_{\mathit{null}}$, we can increase ${\mathcal L}(o_1^k, o_1)$ to 
heavily penalize misclassifications of this type.  With a larger ${\mathcal
  L}(o_1^k, o_1)$, a larger $\xi^k$ will be required if $f_w(\theta,
o_1^k) < f_w(\theta,o_{\mathit{null}})$.  As with payment offsets, this
technique will decrease IR violations but is not guaranteed to
eliminate all of them.  In our experimental results, we refer to this 
as the {\em null loss fix}, and the null loss refers to the value
we choose for ${\mathcal L}(o_1^k,o_{\mathit{null}})$ where $o_1^k \ne
o_{\mathit{null}}$.

\paragraph{Deallocation}
In settings that have a null outcome and are {\em downward closed}
(\ie settings where a feasible outcome $o$ remains feasible if $o_i$ is 
replaced with the null outcome), we modify the function $g$ to allocate the 
null outcome whenever the price function $t_w$ creates an IR violation. This 
reduces ex post regret and in particular ensures ex post IR. On the other 
hand, the total value to the agents necessarily decreases under the
modified allocation. In our experimental results, we refer to this as the 
{\em deallocation fix}.

\section{Applying the Framework}
\label{sec:applying}

In this section, we discuss the application of our framework to two domains: 
multi-minded combinatorial auctions and egalitarian welfare in the assignment problem.

\subsection{Multi-Minded Combinatorial Auctions}
\label{subsec:application-to-cas}

A combinatorial auction allocates items $\{1,\ldots,r\}$
among $n$ agents, such that each agent receives a possibly empty subset
of the items. The outcome space $\Omega_i$ for agent $i$ thus is the set of all subsets of the $r$ items, and the type of agent $i$ can be represented by a vector $\theta_i\in\Theta_i=\mathbb{R}^{2^r}$ that specifies its value for each possible bundle. The set of possible type profiles is then $\Theta=\mathbb{R}^{2^r n}$, and the value $v_i(\theta_i,o_i)$ of agent $i$ for bundle $o_i$ is equal to the entry in $\theta_i$ corresponding to $o_i$. 
We require that valuations are monotone, such that $v_i(\theta_i, o_i) \geq v_i(\theta_i, o_i')$ for all $o_i,o_i'\in\Omega_i$ with $o_i'\subseteq o_i$, and normalized such that $v_i(\theta_i,\emptyset)=0$.
Assuming agent symmetry and adopting the view of agent~$1$, the partial outcome rule $g_1:\Theta\rightarrow\Omega_1$ specifies the bundle 
$g_1(\theta)$ allocated to agent $1$; we require feasibility, so 
that no item is allocated more than once.

In a multi-minded CA, each agent is interested in at most $b$ bundles
for some constant $b$. The special case where $b=1$ is called a
single-minded CA.
In our framework, the restriction to multi-minded CAs leads to a number of 
computational advantages. First, valuation profiles and thus the training data 
can be represented in a compact way, by explicitly writing down the valuations 
for the constant number of bundles each agent is interested in. Second, inner 
products between valuation profiles, which are required to apply the kernel 
trick, can be computed in constant time.

\subsubsection{Attribute Maps}
\label{subsubsec:attributemap}

To apply structural SVMs to multi-minded CAs, we need to specify an 
appropriate attribute map $\chi$. In our experiments we use two attribute 
maps $\chi_1:\Theta_{-1}\times\Omega_1\rightarrow\reals^{2^r(2^r(n-1))}$ and 
$\chi_2:\Theta_{-1}\times\Omega_1\rightarrow\reals^{2^{r}(n-1)}$, which are 
defined as follows:
\begin{equation*}
\chi_1(\theta_{-1}, o_1) = \left[
	\begin{array}{@{}c@{}}
		0 \\
		\cdots \\
	  0 \\
		\theta_{-1} \\
	  0 \\
		\cdots \\
	  0
	\end{array} \right] \hspace*{-1.5em}
	\begin{array}{l}
		\left.\begin{array}{c}
			\\ \phantom{\cdots} \\ \\
		\end{array} \right\} dec(o_1)(2^r(n-1)) \\
		\begin{array}{c}
			\\ 
		\end{array} \\
		\left.\begin{array}{c}
			\\ \phantom{\cdots} \\ \\
		\end{array} \right\} (2^r- dec(o_1)-1)(2^r(n-1))
	\end{array}
	\quad
	,~\chi_2(\theta_{-1}, o_1) =
	\left[
	\begin{array}{c}
	\theta_2 \setminus o_1 \\
	\theta_3 \setminus o_1 \\
	\ldots\\
	\theta_n \setminus o_1
	\end{array}
	\right]. \vspace*{.5ex}
\end{equation*} 
Here, $dec(o_1)=\sum_{j=1}^r 2^{j-1} \mathbf{I}_{j \in o_1}$ is a decimal 
index of bundle $o_1$, where $\mathbf{I}_{j\in o_1}=1$ if $j\in o_1$ and 
$\mathbf{I}_{j\in o_1} = 0$ otherwise. Attribute map $\chi_1$ thus stacks the 
vector $\theta_{-1}$, which represents the valuations of all agents except agent~$1$, with zero vectors of the same dimension, where the position of $\theta_{-1}$ is determined by the index of bundle $o_1$.
The resulting attribute vector is simple but potentially
restrictive. It precludes two instances with different allocated
bundles from sharing attributes, which provides an obstacle to
generalization of the discriminant function across bundles.
Attribute map $\chi_2$ stacks vectors $\theta_i\setminus o_1$, which are 
obtained from $\theta_i$ by setting the entries for all bundles
that intersect with $o_1$ to $0$. This captures the fact that agent $i$ cannot 
be allocated any of the bundles that intersect with $o_1$ if $o_1$ is allocated to agent~$1$.\footnote{Both
  $\chi_1$ and $\chi_2$ are defined for a particular number of items
  and agents, and in our experiments we train a different classifier
  for each number of agents and items. In practice, one can pad out
  items and agents by setting bids to zero and train a single
  classifier.}

\subsubsection{Efficient Computation of Inner Products}
Efficient computation of inner products is possible for both $\chi_1,
\chi_2$.  A full discussion can be found in \appref{app:inner-products}.

\subsubsection{Dealing with an Exponentially Large Output Space}
\label{subsubsec:exp-large-y}
Recall that Training Problems~1 and~2 have constraints for every training
example $(\theta^k, o_1^k)$ and every possible bundle of items $o_1 \in 
\Omega_1$, of which there are exponentially many in the number of items in 
the case of CAs. In lieu of an efficient separation oracle, a workaround 
exists when the discriminant function has additional structure, such 
that the induced payment weakly increases as items are added to a 
bundle. Given this \emph{item monotonicity}, it would suffice to include 
constraints for bundles that have a strictly larger value to the agent than 
any of their respective subsets.
Still, it remains an open problem whether item monotonicity itself can
be imposed on the hypothesis class with a small number of
constraints.\footnote{For polynomial kernels and certain attribute
  maps, a possible sufficient condition for item monotonicity is to
  force the weights $w_{-1}$ to be negative.  However, as with the
  discussion of enforcing $w_1 > 0$ directly, these weight constraints
  do not dualize conveniently and results in the dual formulation no
  longer operate on inner products $\langle\psi'(\theta_{-1}, o_1), \psi'(\theta_{-1}',
  o_1')\rangle$.  As a result, we would be forced to work in the primal, and
  incur extra computational overhead that increases polynomially with
  the kernel degree $d$.  We have performed some preliminary
  experiments with polynomial kernels, but we have not looked into
  reformulating the primal to enforce item monotonicity.}
An alternative is to optimistically assume item monotonicity, only
including the constraints associated with bundles that are explicit in
agent valuations.
The baseline experimental results in \secref{sec:experiments} do
not assume item monotonicity and instead use a separation oracle that
iterates over all possible bundles $o_1 \in \Omega_1$.  We also present
results which test the idea of optimistically assuming item
monotonicity, and while there is a degradation in performance, results
are mostly comparable.

\subsection{The Assignment Problem}
\label{subsec:assignment}

In the assignment problem, we are given a set of $n$ agents and
a set $\{1,\ldots,n\}$ of items, and wish to assign each item to exactly one 
agent. The outcome space of agent $i$ is thus $\Omega_i=\{1,\ldots,n\}$, and its type can be represented by a vector $\theta_i\in\Theta_i=\mathbb{R}^n$. The set of possible type profiles is then $\Theta=\mathbb{R}^{n^2}$.
We consider an outcome rule that maximizes {\em egalitarian welfare} in a lexicographic manner: first, the minimum value of any agent is maximized; if more than one outcome achieves the minimum, the second lowest value is maximized, and so forth.
This outcome rule can be computed by solving a sequence of integer programs.
As before, we assume agent symmetry and adopt the view of agent~$1$.

To complete our specification of the structural SVM framework for this
problem, we need to define an attribute map $\chi_3:\mathbb{R}^{n^2 - n} 
\times\mathbb{N}\rightarrow\mathbb{R}^s$, where the first argument is the type 
profile of all agents but agent~$1$, the second argument is the item assigned 
to agent~$1$, and~$s$ is a dimension of our choosing.
A natural choice for $\chi_3$ is to set
\begin{align*}
\chi_3(\theta_{-1}, j) &= (\theta_2[-j], \theta_3[-j], \ldots, \theta_n[-j]) 
\in \reals^{(n-1)^2},
\end{align*}
where $\theta_i[-j]$ denotes the vector obtained from $\theta_i$ by removing 
the $j$th entry. The attribute map thus reflects the agents' values for all 
items except item $j$, capturing the fact that the item assigned to agent~$1$ 
cannot be assigned to any other agent. Since the outcome space is very small, 
we choose not to use a non-linear kernel on top of this attribute vector.

\section{Experimental Evaluation}
\label{sec:experiments}

We perform a series of experiments to test our theoretical framework.
To run our experiments, we use the \svmstruct package~\citep{JFY09a},
which allows for the use of custom kernel functions, attribute maps,
and separation oracles.

\subsection{Setup}

We begin by briefly discussing our experimental methodology, performance metrics, and optimizations used to speed up the experiments.

\subsubsection{Methodology}

For each of the settings we consider, we generate three data sets: a training
set, a validation set, and a test set. The training set is used as input to
Training Problem~2, which in turn yields classifiers $h_w$ and corresponding
payment rules $p_w$.  For each choice of the parameter $C$ of Training
Problem~2, and the parameter $\gamma$ if the RBF kernel is used, a classifier
$h_w$ is learned based on the training set and evaluated based on the
validation set. The classifier with the highest accuracy on the validation set
is then chosen and evaluated on the test set.
During training, we take the perspective of agent~$1$, so a training set size 
of $\ell$ means that we train an SVM on $\ell$ examples.
Once a partial outcome rule has been learned, however, it can be used to 
infer payments for all agents. We exploit this fact during testing, and report performance metrics across all agents for a given instance in the test set.

\subsubsection{Metrics}

We employ three metrics to measure the performance of the learned
classifiers.  These metrics are computed over the test set
$\{(\theta^k, o^k)\}_{k=1}^\ell$.

\paragraph{Classification accuracy}
Classification accuracy measures the accuracy of the trained
classifier in predicting the outcome.  Each instance of the $\ell$
instances has $n$ agents, so in total we measure accuracy over $n\ell$
instances:\footnote{For a given instance $\theta$, there are actually
  many ways to choose $(\theta_i, \theta_{-i})$ depending on the
  ordering of all agents but agent~$i$.  We discuss a technique
  we refer to as sorting in \secref{subsubsec:exp-optimization},
  which will choose a particular ordering.  When this technique is not
  used, for example in our experiments for the assignment problem, we
  simply fix an ordering of the other agents for each agent $i$ and
  use the same ordering across all instances.} 
\[
	\text{{\em accuracy}} = 100 \cdot \frac{\sum_{k=1}^\ell \sum_{i=1}^n I(h_w(\theta_i, \theta_{-i}) = o_i^k))}{n \ell} .
\]
\paragraph{Ex post regret}
We measure ex post regret by summing over the ex post regret experienced by all
agents in each of the $\ell$ instances in the dataset, \ie
\[
\text{{\em regret}} = \frac{\sum_{k=1}^\ell \sum_{i=1}^n \regret_i(\theta_i^k,
\theta_{-i}^k)}{n\ell} .
\]
\paragraph{Individual rationality violation}
This metric measures the fraction of individual rationality violation
across all agents:
\begin{align*}
\text{{\em ir-violation}} = \frac{\sum_{k=1}^\ell \sum_{i=1}^n I(\irv_i(\theta_i, \theta_{-i}) >
0)}{n \ell}.
\end{align*}

\subsubsection{Optimizations}
\label{subsubsec:exp-optimization}

In the case of multi-minded CAs we map the inputs $\theta_{-1}$ into a smaller
space, which allows us to learn more effectively with smaller amounts of
data.\footnote{The barrier to using more data is not the availability of the
data itself, but the time required for training, because training time scales
quadratically in the size of the training set due to the use of non-linear
kernels.} We use {\em instance-based normalization}, which normalizes the values
in $\theta_{-1}$ by the highest observed value and then rescales the computed
payment appropriately, and {\em sorting}, which orders agents based on bid
values.

\paragraph{Instance-Based Normalization}
The first technique we use is \emph{instance-based
  normalization}. Before passing examples $\theta$ to the learning
algorithm or learned classifier, they are normalized by a positive
multiplier so that the value of the highest bid by agents other than
agent~$1$ is exactly $1$, before passing it to the learning algorithm
or classifier. The values and the solution are then transformed back
to the original scale before computing the payment rule $p_w$.  This
technique leverages the observation that agent 1's allocation depends
on the relative values of the other agent's reports (scaling all
reports by a factor should not affect the outcome chosen).

\paragraph{Sorting}
The second technique we use is \emph{sorting}.  With sorting, instead
of choosing an arbitrary ordering of agents in $\theta_{-i}$, we
choose a specific ordering based on the maximum value the agent
reports.  In the single-item setting, this amounts to ordering agents
by their value.  In the multi-minded CA setting, agents are ordered by
the value they report for their most desired bundle.  The intuition
behind sorting is that we can again decrease the space of possible
$\theta_{-i}$ reports the learner sees and learn more quickly.  In the
single-item case, we know that the second price payment rule only
depends on the maximum value across all other agents, and sorting
places this value in the first coordinate of $\theta_{-i}$.

\subsection{Single-Item Auction}

As a sanity check, we perform experiments on the single-item auction with the
optimal outcome rule, where the agent with the highest bid receives the item.
In the single-item case, we run experiments where $D$ is the distribution where
agent values are drawn independently and uniformly from $[0, 1]$.  The outcome
rule $g$ we use is the value-maximizing rule, i.e., the agent with the highest
value receives the item.  We use a training set size of 300 and validation and
test set sizes of 1000. In this case, we know that the associated payment
function that makes $(g, p)$ strategyproof is the second price payment rule.

The results reported in \tabref{tab:singleitem} and
\figref{fig:si_2_agents} are for the $\chi_1, \chi_2$ attribute
maps, which can be applied to this setting by observing that
single-item auctions are a special case of multi-minded CAs.
In particular, letting $z$
be the $0$ vector of dimension $n-1$, $\chi_1(\theta_{-1}, o_1) = (\theta_{-1},
z)$ if $o_1 = \emptyset$ and $\chi_1(\theta_{-1}, o_1) = (z, \theta_{-1})$ if 
$o_1= \{1 \}$ and $\chi_2(\theta_{-1}, o_1) = \theta_{-1}$ if $o_1=\emptyset$ 
and $\chi_2(\theta_{-1}, o_1)=z$ if $o_1 = \{1\}$.

For both choices of the attribute map we obtain excellent accuracy and very
close approximation to the second-price payment rule. This shows that the 
framework is able to automatically learn the payment rule of Vickrey's auction.

\begin{table}[ht!]
  \centering
  \begin{tabular}{*{7}{c}}
\hline \multirow{2}{*}{$n$} & \multicolumn{2}{c}{accuracy} & \multicolumn{2}{c}{regret} & \multicolumn{2}{c}{ir-violation}\\
 & $\chi_1$ & $\chi_2$ & $\chi_1$ & $\chi_2$ & $\chi_1$ & $\chi_2$\\
\hline
2& 99.7& 93.1& 0.000& 0.003& 0.00& 0.07\\
3& 98.7& 97.6& 0.000& 0.000& 0.01& 0.00\\
4& 98.4& 99.1& 0.000& 0.000& 0.00& 0.01\\
5& 97.3& 96.6& 0.001& 0.001& 0.02& 0.00\\
6& 97.6& 97.4& 0.000& 0.001& 0.00& 0.02\\\hline
\end{tabular}

  \caption{Performance metrics for single-item auction.\label{tab:singleitem}}
\end{table}
\begin{figure}[ht!]
  \centering
	\includegraphics[width=0.45\textwidth]{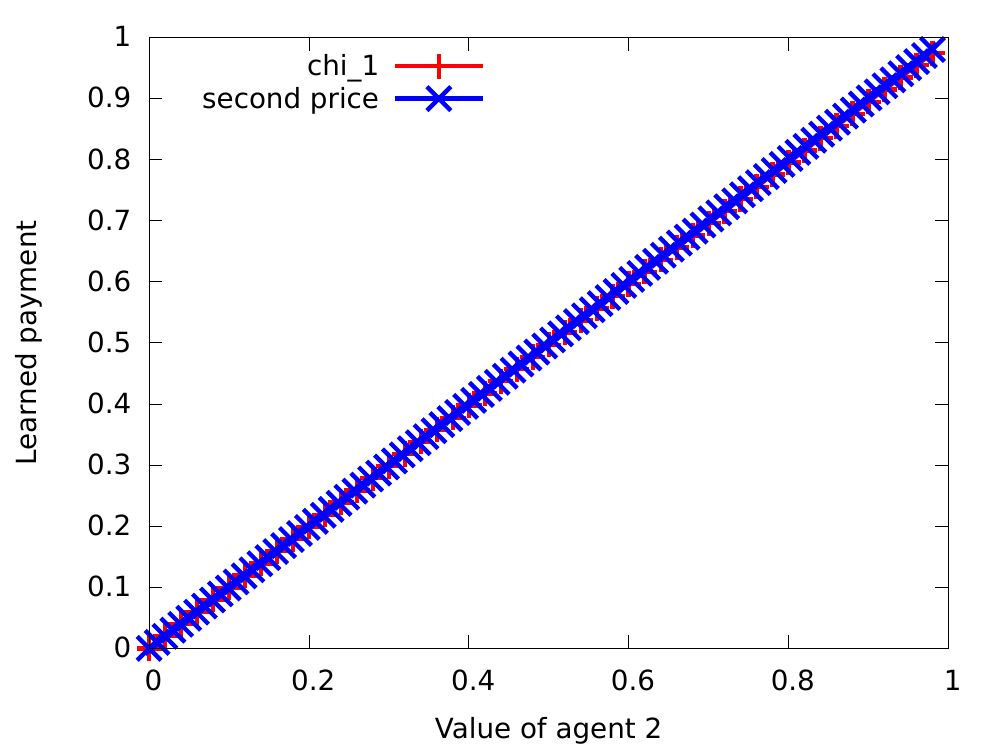} \hfill
	\includegraphics[width=0.45 \textwidth]{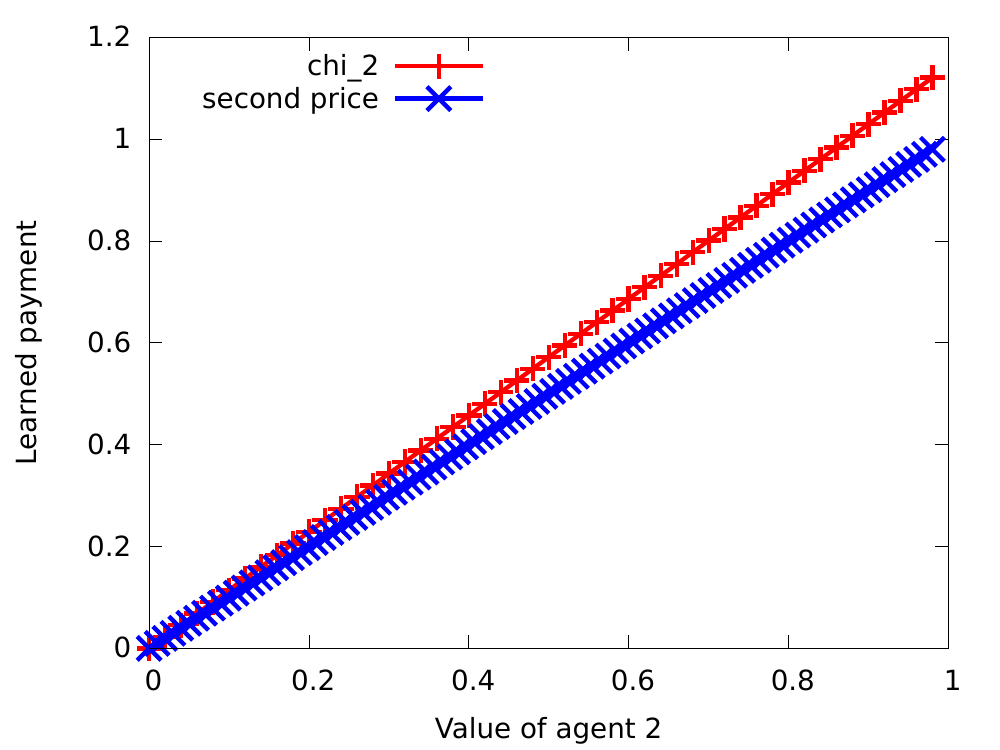}
  \caption{Learned payment rule vs.\@ second-price payment rule for single-item auction
  with $2$ agents, for $\chi_1$ (left) and $\chi_2$ (right).}
  \label{fig:si_2_agents}
\end{figure}

\subsection{Multi-Minded CAs}
\subsubsection{Type Distribution}

Recall that in a multi-minded setting, there are $r$ items, and each
agent is interested in exactly $b$ bundles.  For each bundle, we use
the following procedure (inspired by Sandholm's decay distribution for
the single-minded setting \cite{Sandholm02}) to determine which items
are included in the bundle.  We first assign an item to the bundle
uniformly at random.  Then with probability $\alpha$, we add another
random item (chosen uniformly from the remaining items), and with
probability $(1 - \alpha)$ we stop.  We continue this procedure until
we stop or have exhausted the items.  We use $\alpha = 0.75$ to be
consistent with \cite{Sandholm02}, as they report that the winner
determination problem (finding the feasible allocation that maximizes
total value) is difficult for this setting of $\alpha$.

Once the bundle identities have been determined, we sample values for
these bundles.  Let $\bc$ be an $r$-dimensional vector with entries
chosen uniformly from $(0,1]$. For each agent~$i$, let $\bd_i$ be an
  $r$-dimensional vector with entries chosen uniformly from
  $(0,1]$. Each entry of $\bc$ denotes the common value of a specific
    item, while each entry of $\bd_i$ denotes the private value of a
    specific item for agent~$i$. The value of bundle $S_{ij}$ is then
    given by
\begin{equation*}
	v_{ij} = \min_{S_{ij'}\leq S_{ij}} \left( \frac{\langle S_{ij'},
  \beta\bc+(1-\beta)\bd_i \rangle}{r} \right)^\zeta
\end{equation*}
for parameters $\beta\in[0,1]$ and $\zeta$. The inner product in the
numerator corresponds to a sum over values of items, where common and
private values for each item are respectively weighted with $\beta$
and $(1-\beta)$. The denominator normalizes all valuations to the
interval $(0,1]$. Parameter $\zeta$ controls the degree of
  complementarity among items: $\zeta>1$ implies that goods are
  complements, whereas $\zeta<1$ means that goods are
  substitutes. Choosing the minimum over bundles $S_{ij'}$ contained
  in $S_{ij}$ finally ensures that the resulting valuations are
  monotonic.

\subsubsection{Outcome Rules}

We use two outcome rules in our experiments.  For the \textit{optimal}
outcome rule, the payment rule $p_{vcg}$ makes the mechanism
$(g_{opt}, p_{vcg})$ strategyproof.  Under this payment rule, agent
$i$ pays the externality it imposes on other agents.  That is, 
$$p_{vcg,1}(\theta) = \left(\max_{o \in \Omega} \sum_{i \ne 1} v_i(\theta_i, o_i) \right) - \sum_{i \ne 1}
v_i(\theta_i, g_i(\theta)).$$

The second outcome rule with which we experiment is a
generalization of the greedy outcome rule for single-minded CA
\citet{Lehmann2002}.  Our generalization of the greedy rule is as
follows.  Let $\theta$ be the agent valuations and $o_i(j)$ denote the
$j$-th bundle desired by agent $i$.  For each bundle $o_i(j)$, assign
a score $v_i(\theta_i, o_i(j)) / \sqrt{|o_i(j)|}$, where $|o_i(j)|$ indicates the
total items in bundle $o_i(j)$.  The greedy outcome rule orders the
desired bundles by this score, and takes the bundle $o_i(j)$ with the
next highest score as long as agent $i$ has not already been allocated
a bundle and $o_i(j)$ does not contain any items already allocated.
While this greedy outcome rule has an associated payment rule that
makes it strategyproof in the single-minded case, it is not implementable
in the multi-minded case as the example in \appref{app:greedy}
shows.

\subsubsection{Description of Experiments}
We experiment with training sets of sizes $100$, $300$, and $500$, and
validation and test sets of size $1000$. All experiments we report on are for 
a setting with $5$ agents, $5$ items, and $3$
bundles per agent, and use $\beta=0.5$, the RBF kernel, and parameters $C\in\{10^4,10^5\}$ and $\gamma\in\{0.01,0.1,1\}$.

\subsubsection{Basic Results}
\begin{table}
  \centering
  {\small
\begin{tabular}{@{}c@{\hspace{3.3pt}}c@{\hspace{2pt}}
  *{9}{@{\hspace{3.3pt}}c}@{\hspace{2pt}}
  *{9}{@{\hspace{3.3pt}}c}@{}}
\hline 
 &  &
\multicolumn{9}{c}{Optimal outcome rule} & 
\multicolumn{9}{c}{Greedy outcome rule} \\
 &  &
\multicolumn{3}{c}{accuracy} & 
\multicolumn{3}{c}{regret} & 
\multicolumn{3}{c}{ir-violation} &
\multicolumn{3}{c}{accuracy} & 
\multicolumn{3}{c}{regret} & 
\multicolumn{3}{c}{ir-violation} \\
      $n$ & $\zeta$ & 
      $p_{vcg}$ & $\chi_1$ & $\chi_2$ & 
      $p_{vcg}$ & $\chi_1$ & $\chi_2$ & 
      $p_{vcg}$ & $\chi_1$ & $\chi_2$ &
      $p_{vcg}$ & $\chi_1$ & $\chi_2$ & 
      $p_{vcg}$ & $\chi_1$ & $\chi_2$ & 
      $p_{vcg}$ & $\chi_1$ & $\chi_2$ \\
				\hline
        2 & 0.5& 100& 70.7& 91.9& 0& 0.014& 0.002& 0.0& 0.06& 0.03 &  50.9& 59.1& 40.6& 0.079& 0.030& 0.172& 0.22& 0.12& 0.33\\
3 & 0.5& 100& 54.5& 75.4& 0& 0.037& 0.017& 0.0& 0.19& 0.10 &  55.4& 57.9& 54.7& 0.070& 0.030& 0.088& 0.18& 0.21& 0.36\\
4 & 0.5& 100& 53.8& 67.7& 0& 0.042& 0.031& 0.0& 0.22& 0.18 &  61.1& 58.2& 57.9& 0.056& 0.033& 0.056& 0.14& 0.20& 0.31\\
5 & 0.5& 100& 15.8& 67.0& 0& 0.133& 0.032& 0.0& 0.26& 0.19 &  64.9& 61.3& 63.0& 0.048& 0.027& 0.042& 0.13& 0.19& 0.24\\
6 & 0.5& 100& 61.1& 68.2& 0& 0.037& 0.032& 0.0& 0.22& 0.20 &  66.6& 63.8& 63.8& 0.041& 0.034& 0.045& 0.12& 0.20& 0.24\\\hline
2 & 1.0& 100& 84.5& 93.4& 0& 0.008& 0.001& 0.0& 0.08& 0.02 &  87.8& 86.6& 84.0& 0.007& 0.005& 0.008& 0.04& 0.06& 0.09\\
3 & 1.0& 100& 77.1& 83.5& 0& 0.012& 0.005& 0.0& 0.13& 0.09 &  85.3& 86.7& 85.7& 0.006& 0.006& 0.006& 0.04& 0.07& 0.05\\
4 & 1.0& 100& 74.6& 81.1& 0& 0.014& 0.009& 0.0& 0.16& 0.12 &  82.4& 86.5& 84.2& 0.006& 0.006& 0.007& 0.05& 0.08& 0.08\\
5 & 1.0& 100& 73.4& 77.4& 0& 0.018& 0.011& 0.0& 0.19& 0.12 &  82.7& 85.8& 84.9& 0.007& 0.009& 0.009& 0.04& 0.10& 0.10\\
6 & 1.0& 100& 75.0& 77.7& 0& 0.020& 0.013& 0.0& 0.20& 0.16 &  80.0& 87.4& 88.1& 0.006& 0.007& 0.005& 0.04& 0.08& 0.07\\\hline
2 & 1.5& 100& 91.5& 96.9& 0& 0.004& 0.000& 0.0& 0.06& 0.02 &  94.7& 91.1& 91.7& 0.002& 0.002& 0.002& 0.02& 0.04& 0.04\\
3 & 1.5& 100& 91.0& 93.4& 0& 0.004& 0.001& 0.0& 0.05& 0.03 &  97.1& 92.8& 93.2& 0.001& 0.002& 0.001& 0.01& 0.02& 0.04\\
4 & 1.5& 100& 92.5& 94.2& 0& 0.003& 0.001& 0.0& 0.03& 0.04 &  96.4& 91.5& 92.1& 0.001& 0.003& 0.002& 0.02& 0.07& 0.07\\
5 & 1.5& 100& 91.7& 93.9& 0& 0.004& 0.002& 0.0& 0.06& 0.03 &  97.5& 90.5& 91.4& 0.001& 0.004& 0.002& 0.01& 0.06& 0.04\\
6 & 1.5& 100& 91.9& 93.7& 0& 0.003& 0.001& 0.0& 0.05& 0.04 &  98.4& 92.2& 92.8& 0.000& 0.003& 0.002& 0.01& 0.06& 0.06\\\hline

\end{tabular}

  \caption{Results for multi-minded CA with
           training set size 500.\label{tbl:basic-combined}}
  }
\end{table}
\tabref{tbl:basic-combined} presents the basic
results for multi-minded CAs with optimal and greedy outcome rules,
respectively. For both outcome rules, we present the results for $p_{vcg}$ as a
baseline.
Because $p_{vcg}$ is the
strategyproof payment rule for the optimal outcome rule, $p_{vcg}$ always has
accuracy~$100$, regret~$0$, and IR violation~$0$ for the optimal outcome rule.

Across all instances, as expected, accuracy is negatively correlated with regret
and ex post IR violation.  The degree of complementarity between items, $\zeta$,
as well as the outcome rule chosen, has a major effect on the results.
Instances with low complementarity ($\zeta = 0.5$) yield payment rules with
higher regret, and $\chi_1$ performs better on the greedy outcome rule while
$\chi_2$ performs better on the optimal outcome rule.  For high complementarity
between items the greedy outcome tends to allocate all items to a single agent,
and the learned price function sets high prices for small bundles to capture
this property. For low complementarity the allocation tends to be split and less
predictable.
Still, the best classifiers achieve average ex post regret of less than 0.032
(for values normalized to [0,1]) even though the corresponding prediction
accuracy can be as low as 67\%.  For the greedy outcome rule, the performance of
$p_{vcg}$ is comparable for $\zeta\in\{1.0,1.5\}$ but worse than the payment
rule learned in our framework in the case of $\zeta=0.5$, where the greedy
outcome rule becomes less optimal.
\begin{table}[htbp]
\centering
\footnotesize{
\begin{tabular}{@{}*{16}{@{\hspace{7pt}}c}@{}}
\hline \multirow{2}{*}{$n$} & \multirow{2}{*}{$\zeta$}&accuracy&\multicolumn{2}{c}{100}&\multicolumn{2}{c}{300}&\multicolumn{2}{c}{500}&
regret&\multicolumn{2}{c}{100}&\multicolumn{2}{c}{300}&\multicolumn{2}{c}{500}\\
& & $p_{vcg}$ & $\chi_1$  & $\chi_2$ & $\chi_1$ & $\chi_2$ & $\chi_1$ & $\chi_2$ & $p_{vcg}$ & $\chi_1$ & $\chi_2$ & $\chi_1$ & $\chi_2$ & $\chi_1$ & $\chi_2$\\
\hline
2 & 0.5& 50.9& 54.3& 48.2& 57.0& 46.9& 59.1& 40.6& 0.079& 0.045& 0.195& 0.032& 0.098& 0.030& 0.172\\
3 & 0.5& 55.4& 50.1& 49.8& 55.7& 54.4& 57.9& 54.7& 0.070& 0.054& 0.078& 0.038& 0.082& 0.030& 0.088\\
4 & 0.5& 61.1& 53.4& 56.2& 56.4& 58.5& 58.2& 57.9& 0.056& 0.050& 0.059& 0.040& 0.061& 0.033& 0.056\\
5 & 0.5& 64.9& 14.2& 57.9& 61.0& 61.8& 61.3& 63.0& 0.048& 0.173& 0.064& 0.038& 0.048& 0.027& 0.042\\
6 & 0.5& 66.6& 58.4& 58.8& 62.2& 63.9& 63.8& 63.8& 0.041& 0.039& 0.059& 0.037& 0.049& 0.034& 0.045\\\hline
2 & 1.0& 87.8& 80.7& 80.5& 84.4& 84.1& 86.6& 84.0& 0.007& 0.010& 0.010& 0.009& 0.008& 0.005& 0.008\\
3 & 1.0& 85.3& 74.9& 78.0& 83.0& 80.6& 86.7& 85.7& 0.006& 0.020& 0.011& 0.009& 0.009& 0.006& 0.006\\
4 & 1.0& 82.4& 78.5& 80.1& 84.2& 83.1& 86.5& 84.2& 0.006& 0.015& 0.014& 0.008& 0.009& 0.006& 0.007\\
5 & 1.0& 82.7& 81.0& 81.8& 84.3& 84.3& 85.8& 84.9& 0.007& 0.020& 0.014& 0.010& 0.009& 0.009& 0.009\\
6 & 1.0& 80.0& 81.8& 83.7& 87.6& 88.3& 87.4& 88.1& 0.006& 0.062& 0.018& 0.008& 0.005& 0.007& 0.005\\\hline
2 & 1.5& 94.7& 83.3& 88.1& 89.3& 89.8& 91.1& 91.7& 0.002& 0.008& 0.003& 0.003& 0.002& 0.002& 0.002\\
3 & 1.5& 97.1& 86.9& 87.6& 90.3& 91.5& 92.8& 93.2& 0.001& 0.005& 0.004& 0.003& 0.002& 0.002& 0.001\\
4 & 1.5& 96.4& 88.4& 90.7& 89.3& 90.8& 91.5& 92.1& 0.001& 0.005& 0.003& 0.004& 0.003& 0.003& 0.002\\
5 & 1.5& 97.5& 87.2& 88.5& 91.4& 90.5& 90.5& 91.4& 0.001& 0.006& 0.004& 0.003& 0.003& 0.004& 0.002\\
6 & 1.5& 98.4& 86.3& 86.8& 91.4& 92.5& 92.2& 92.8& 0.000& 0.011& 0.007& 0.004& 0.002& 0.003& 0.002\\\hline

\end{tabular}
}
\caption{Effect of training set size on accuracy of learned classifier.
Multi-minded CA, greedy outcome rule.  Training set size is given in the column labels for $\chi_1, \chi_2$.  $p_{vcg}$
does not have a training set size.
\label{tbl:training-set-size}}
\end{table}
\subsubsection{Effect of Training Set Size}
Table \ref{tbl:training-set-size} charts
performance as the training set size is varied for the greedy outcome rule.
While training data is readily available (we can simply sample from $D$ and run
the outcome rule $g$), training time becomes prohibitive for larger training set
sizes.  Table \ref{tbl:training-set-size} shows that regret decreases with
larger training sets, and for a training set size of 500, the best of $\chi_1$
and $\chi_2$ outperforms $p_{vcg}$ for $\zeta = 0.5$ and is comparable to
$p_{vcg}$ for $\zeta \in \{ 1.0, 1.5 \}$.

\subsubsection{IR Fixes}
\begin{table}
\centering
\newcommand{\thead}[1]{\multicolumn{3}{c}{\makebox[0pt]{#1}}}
\begin{tabular*}{.9\linewidth}{*{13}{@{\extracolsep{\fill}}c}@{}}
  \hline
	\multirow{2}{*}{\parbox{1.1cm}{\centering payment offset}} 
	& \thead{accuracy} & \thead{regret} & \thead{ir-violation} & 
	\thead{ir-fix-welfare-avg} \\
 	& 0.5 & 1.0 & 1.5 & 0.5 & 1.0 & 1.5 & 0.5 & 1.0 & 1.5 & 0.5 & 1.0 & 1.5 \\
	\hline
  0& 59.7& 61.8& 61.7& 0.065& 0.048& 0.042& 0.35& 0.26& 0.21& 0.27& 0.43& 0.52\\
0.05& 61.7& 61.2& 60.1& 0.054& 0.045& 0.044& 0.29& 0.20& 0.15& 0.37& 0.54& 0.65\\
0.10& 62.1& 59.3& 56.7& 0.048& 0.047& 0.051& 0.23& 0.14& 0.10& 0.48& 0.66& 0.75\\
0.15& 60.4& 55.1& 52.2& 0.047& 0.055& 0.064& 0.17& 0.10& 0.06& 0.59& 0.75& 0.84\\
0.20& 57.8& 51.7& 48.5& 0.052& 0.067& 0.079& 0.12& 0.06& 0.03& 0.70& 0.83& 0.90\\
0.25& 54.3& 47.7& 44.3& 0.061& 0.082& 0.096& 0.08& 0.03& 0.02& 0.79& 0.89& 0.93\\\hline

\end{tabular*}

\caption{Impact of payment offset and null loss fix for $\zeta=0.5$ and greedy 
	       outcome rule, training set size 300. All results are for $\chi_2$,
	       null loss values appear in the second row. \label{tbl:dealloc-fix}}
\end{table}
\begin{figure}
	\centering
  \includegraphics[width=0.45\textwidth,page=4]{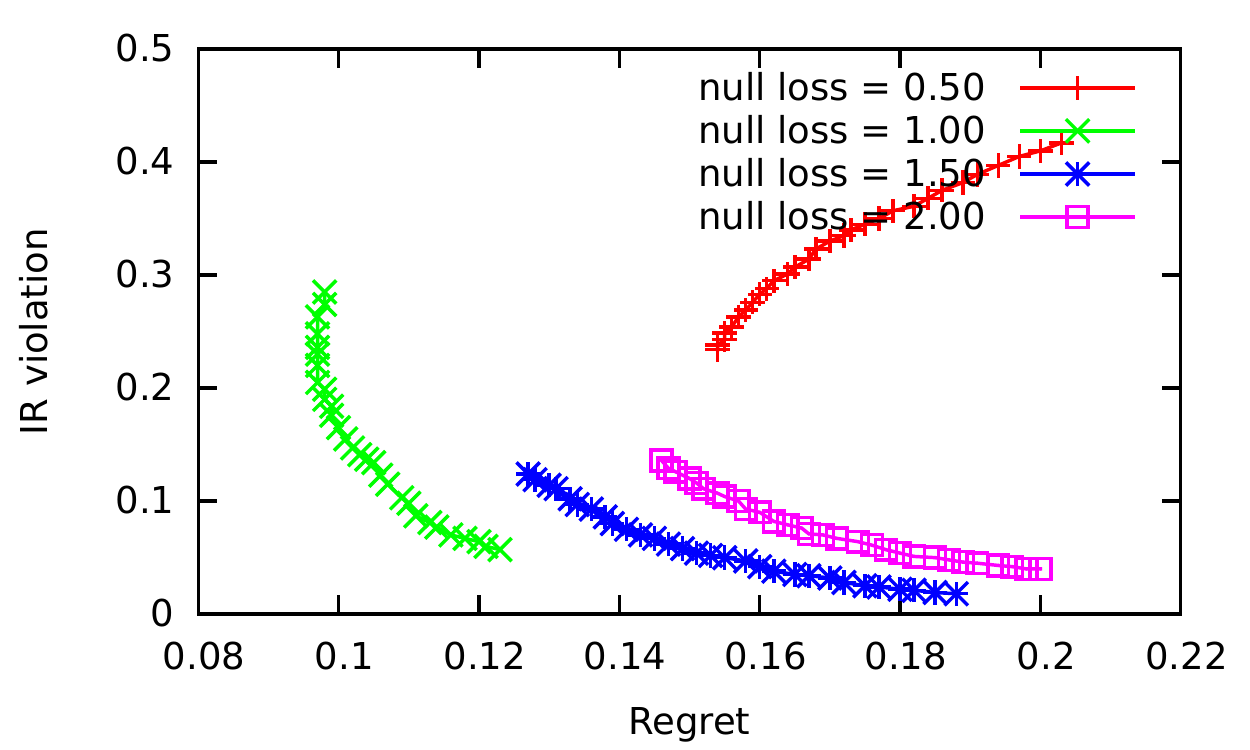}
  \includegraphics[width=0.45\textwidth,page=9]{combined_ir_violation.pdf}
	\vspace*{-1ex}
	\caption{Impact of payment offset and null loss fix for
		greedy outcome rule, training set size~300.\label{fig:ir-fixes}}
\end{figure}
\tabref{tbl:dealloc-fix} summarizes our results regarding the various fixes to 
IR violations, for the particularly challenging case of the greedy outcome 
rule and $\zeta = 0.5$.
The extent of IR violation decreases with larger payment offset and null 
loss.  Regret tends to move in the opposite direction, but there are cases 
where IR violation and regret both decrease.
The three rightmost columns of \tabref{tbl:dealloc-fix} list the average 
ratio between welfare after and before the deallocation fix, across the
instances in the test set.  With a payment offset of~$0$, a large
welfare hit is incurred if we deallocate agents with IR violations.
However, this penalty decreases with increasing payment offsets and increasing 
null loss. At the most extreme payment offset and null loss adjustment, the IR 
violation is as low as $2\%$, and the deallocation fix incurs a welfare 
loss of only~$7\%$.

\figref{fig:ir-fixes} shows a graphical representation of the
impact of payment offsets and null losses. Each line in the plot corresponds 
to a payment rule learned with a different null loss, and each point on a
line corresponds to a different payment offset. The payment offset is zero for 
the top-most point on each line, and equal to~$0.29$ for the lowest point on 
each line. Increasing the payment offset always decreases the rate of IR 
violation, but may decrease or increase regret. Increasing null loss lowers 
the top-most point on a given line, but arbitrarily increasing null loss can 
be harmful. Indeed, in the figure on the left,
a null loss of~$1.5$ results in a slightly higher 
top-most point but significantly lower regret at this top-most point compared 
to a null loss of~$2.0$.  It is also interesting to note that these adjustments
have much more impact on the hardest distribution with $\zeta = 0.5$.
\begin{table}
\centering
\begin{tabular}{@{}
  *{8}{@{\hspace{4pt}}c}@{}}
\hline \multirow{2}{*}{$n$} & \multirow{2}{*}{$\zeta$} & \multicolumn{2}{c}{accuracy} & \multicolumn{2}{c}{regret} & \multicolumn{2}{c}{ir-violation}\\
 &  & $\chi_2$ & $\chi_2$ (i-mon) & $\chi_2$ & $\chi_2$ (i-mon) & $\chi_2$ & $\chi_2$ (i-mon)\\
\hline
2 & 0.5& 46.9& 46.3& 0.098& 0.232& 0.28& 0.38\\
3 & 0.5& 54.4& 8.6& 0.082& 0.465& 0.33& 0.06\\
4 & 0.5& 58.5& 48.2& 0.061& 0.811& 0.31& 0.25\\
5 & 0.5& 61.8& 57.0& 0.048& 0.136& 0.26& 0.26\\
6 & 0.5& 63.9& 61.3& 0.049& 0.078& 0.25& 0.20\\\hline
2 & 1.0& 84.1& 82.2& 0.008& 0.010& 0.06& 0.08\\
3 & 1.0& 80.6& 80.1& 0.009& 0.010& 0.10& 0.09\\
4 & 1.0& 83.1& 79.7& 0.009& 0.012& 0.11& 0.11\\
5 & 1.0& 84.3& 77.2& 0.009& 0.020& 0.10& 0.11\\
6 & 1.0& 88.3& 83.9& 0.005& 0.013& 0.08& 0.11\\\hline
2 & 1.5& 89.8& 89.1& 0.002& 0.003& 0.03& 0.06\\
3 & 1.5& 91.5& 91.3& 0.002& 0.003& 0.04& 0.04\\
4 & 1.5& 90.8& 89.7& 0.003& 0.003& 0.06& 0.06\\
5 & 1.5& 90.5& 87.3& 0.003& 0.005& 0.04& 0.05\\
6 & 1.5& 92.5& 70.8& 0.002& 0.081& 0.06& 0.17\\\hline
\end{tabular}

\caption{Comparison of performance with and without optimistically assuming
item monotonicity.  (i-mon) indicates a
payment rule learned by optimistically assuming item
monotonicity. Greedy outcome rule. Training set size 300. \label{tbl:item-mon}}
\end{table}
\subsubsection{Item Monotonicity}
\tabref{tbl:item-mon} presents a comparison of a payment rule
learned with explicit enumeration of all bundle constraints (the
default that we have been using for our other results) and a payment
rule learned by optimistically assuming item monotonicity (see 
\secref{subsubsec:exp-large-y}).  Performance is affected when we drop
constraints and optimistically assume item monotonicity, although the
effects are small for $\zeta \in \{ 1.0, 1.5 \}$ and larger for $\zeta
5= 0.5$.  Because item monotonicity allows for the training problem to
be succinctly specified, we may be able to train on more data, and
this seems a very promising avenue for further consideration (perhaps
coupled with heuristic methods to add additional constraints to the
training problem).

\subsection{The Assignment Problem}
\label{subsec:exp-assignment}

In the assignment problem, agents' values for the items are sampled uniformly and independently from $[0,1]$. We use a training set of size $600$, validation and test sets of size $1000$, and the RBF kernel with parameters
$C\in\{10,1000,100000\}$ and $\gamma\in\{0.1,0.5,1.0\}$.

The performance of the learned payment rules is compared to that of three 
VCG-based payment rules. Let $W$ be the total welfare of all agents other than 
$i$ under the outcome chosen by $g$, and $W_{eg}$ be the minimum value any 
agent other than $i$ receives under this outcome. We then consider the 
following payment rules:
(1)~the {\em vcg} payment rule, where agent~$i$ pays the difference
between the maximum total welfare of the other agents under any allocation
and $W$;
(2)~the {\em tot-vcg} payment rule, where agent~$i$ pays the
difference between the total welfare of the other agents under the allocation 
maximizing egalitarian welfare and $W$; and
(3)~the {\em eg-vcg} payment rule, where agent~$i$ pays the
difference between the minimum value of any agent under the allocation
maximizing egalitarian welfare and $W_{eg}$.
\begin{table}
\centering
\newcommand{\thead}[1]{\multicolumn{4}{c}{\makebox[0pt]{#1}}}
\begin{tabular*}{.85\linewidth}{*{13}{@{\extracolsep{\fill}}c}@{}}
  \hline 
	\multirow{2}{*}{$n$} & 
	\thead{accuracy} & \thead{regret} & \thead{ir-violation} \\
  & vcg & tot-vcg & eg-vcg & $p_w$ & vcg & tot-vcg & eg-vcg & $p_w$ & vcg &
  tot-vcg & eg-vcg & $p_w$ \\
  \hline
  2& 64.3& 67.5& 67.5& 89.0& 0.018& 0.015& 0.015& 0.023& 0.03& 0.01& 0.01& 0.03\\
3& 48.0& 52.1& 42.5& 77.9& 0.070& 0.077& 0.127& 0.041& 0.06& 0.07& 0.03& 0.04\\
4& 40.6& 43.1& 30.8& 71.0& 0.111& 0.123& 0.199& 0.054& 0.07& 0.09& 0.03& 0.02\\
5& 32.4& 35.3& 24.5& 63.9& 0.157& 0.169& 0.254& 0.071& 0.10& 0.12& 0.03& 0.01\\
6& 27.1& 29.9& 20.0& 59.0& 0.189& 0.208& 0.290& 0.074& 0.10& 0.13& 0.03& 0.01\\\hline

\end{tabular*}

\caption{Results for assignment problem with egalitarian outcome rule 
\label{tbl:eg-welfare}}
\end{table}

The results for attribute map $\chi_3$ are shown in \tabref{tbl:eg-welfare}.  
We see that the learned payment rule $p_w$ yields significantly lower regret 
than any of the VCG-based payment rules, and average ex post regret less than 
$0.074$ for values normalized to $[0,1]$. Since we are not maximizing the sum 
of values of the agents, it is not very surprising that VCG-based payment 
rules perform rather poorly. The learned payment rule $p_w$ can adjust to the 
outcome rule, and also achieves a low fraction of ex post IR
violation of at most $3\%$.
\jl{Fix this to reflect the actual data.}

\section{Conclusions}

We have introduced a new paradigm for computational mechanism design
in which statistical machine learning is adopted to design payment rules 
for given algorithmically specified outcome rules, and have
shown encouraging experimental results. Future directions of interest
include (1)~an alternative formulation of the problem as a regression
rather than classification problem, (2)~constraints on properties of the
learned payment rule, concerning for example the core or budgets, 
(3)~methods that learn classifiers more likely to induce feasible
outcome rules, so that these learned outcome rules can be used, 
(4)~optimistically
assuming item monotonicity and dropping constraints implied by it, thereby allowing for better scaling of training time with
training set size at the expense of optimizing against a subset of
the full constraints in the training problem,
and (5)~an investigation of the extent to which alternative goals such as
regret percentiles or {\em interim} regret can be achieved through
machine learning.

\section*{Acknowledgments} 

We thank Shivani Agarwal, Vince Conitzer, Amy Greenwald, Jason
Hartline, and Tim Roughgarden for valuable discussions and the anonymous 
referees for helpful feedback.  All errors remain our own. This material is based upon work supported in part by the National Science Foundation under grant CCF-1101570, the Deutsche Forschungsgemeinschaft under grant FI~1664/1-1, an EURYI award, and an NDSEG fellowship.
\bibliography{abbshort,svmmd}

\appendix

\section{Efficient Computation of Inner Products} %
\label{app:inner-products}
For both $\chi_1$ and $\chi_2$, computing inner products reduces to
the question of whether inner products between valuation profiles are
efficiently computable. For $\chi_1$, we have that
\begin{align*}
  \left\langle \chi_1(\theta_{-1}, o_1), \chi_1(\theta_{-1}', o_1') \right\rangle
  &= \mathbf{I}_{o_1 = o_1'} \sum_{i=2}^n \left\langle \theta_i, \theta'_i \right\rangle , 
	\intertext{where indicator $\mathbf{I}_{o_1=o_1'}=1$ if $o_1=o_1'$ and $\mathbf{I}_{o_1 = o_1'}=0$ otherwise. For $\chi_2$,}
	\left\langle \chi_2(\theta_{-1}, o_1), \chi_2(\theta_{-1}', o_1') \right\rangle &=
  \sum_{i=2}^n \left\langle \theta_i \setminus o_1, \theta'_i \setminus o_1 \right\rangle .
\end{align*}

We next develop efficient methods for computing the inner products
$\left\langle \theta_i,\theta'_i\right\rangle$ on compactly represented
valuation functions.  The computation of $\left\langle \theta_i\setminus
o_1,\theta'_i\setminus o_1 \right\rangle$ can be done through similar methods.

In the single-minded setting, let $\theta_i$ correspond to a bundle
$S_i\subseteq \{1,\ldots,r\}$ of items with value $v_i$, and $\theta'_i$
correspond to a set $S'_i\subseteq \{1,\ldots,r\}$ of items valued at
$v'_i$.

Each set containing both $S_i$ and $S'_i$ contributes $v_i v'_i$ to
$\theta_i^T\theta'_i$, while all other sets contribute $0$. Since there are
exactly $2^{r-|S_i \cup S'_i|}$ sets containing both $S_i$ and $S'_i$,
we have
\[
	\theta_i^T \theta'_i = v_i v'_i 2^{r-|S_i \cup S'_i|}.
\]
 
This is a special case of the formula for the multi-minded case. 
\begin{lemma}  \label{lem:dot-product}
	Consider a multi-minded CA and two bid vectors $x_1$ and
        $x'_1$ corresponding to sets $S=\{S_1,\dots,S_s\}$ and
        $S'=\{S'_1,\dots,S'_t\}$, with associated values $v_1, \dots,
        v_s$ and $v'_1,\dots,v'_t$. Then,
\begin{multline} \label{eq:iprod}
	x_1^T x'_1 = \sum_{T\subseteq S, T'\subseteq S'} \Bigl(
        (-1)^{|T|+|T'|} \cdot (\min_{S_i \in T} v_i) \cdot (\min_{S'_j
          \in T'} v'_j) \cdot %
        2^{r-|(\bigcup_{S_i \in T} S_i) \cup (\bigcup_{S'_j \in T'}
          S'_j)|} \Bigr).
\end{multline}
\end{lemma}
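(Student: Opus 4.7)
The plan is to express the multi-minded valuation in a form that commutes with the sum over all $2^r$ bundles implicit in the inner product. Monotonicity together with the $s$ atomic bundle--value pairs forces $x_1(B) = \max\{v_i : S_i \subseteq B\}$ (with $\max\emptyset = 0$), and analogously $x'_1(B) = \max\{v'_j : S'_j \subseteq B\}$, so that $x_1^T x'_1 = \sum_{B \subseteq \{1,\ldots,r\}} x_1(B)\, x'_1(B)$.

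First, I would rewrite each max via the layer-cake identity $x_1(B) = \int_0^\infty \mathbf{I}[x_1(B) > t]\,dt$. The event $\{x_1(B) > t\}$ is the union, over those indices $i$ with $v_i > t$, of the events $\{S_i \subseteq B\}$. Applying inclusion--exclusion to this union and then integrating in $t$, using $\int_0^\infty \mathbf{I}[\min_{S_i \in T} v_i > t]\,dt = \min_{S_i \in T} v_i$, yields
\[
x_1(B) \;=\; \sum_{\emptyset \neq T \subseteq S} (-1)^{|T|+1}\,\bigl(\min_{S_i \in T} v_i\bigr)\, \mathbf{I}\Bigl[\bigcup_{S_i \in T} S_i \subseteq B\Bigr],
\]
and symmetrically for $x'_1(B)$ over subsets of $S'$.

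Next I would substitute both expansions into $\sum_B x_1(B)\,x'_1(B)$, distribute the product of sums, and push the sum over $B$ innermost. For fixed non-empty $T \subseteq S$ and $T' \subseteq S'$, the inner sum counts bundles $B \subseteq \{1,\ldots,r\}$ that contain the fixed set $U_{T,T'} = (\bigcup_{S_i \in T} S_i) \cup (\bigcup_{S'_j \in T'} S'_j)$, which gives exactly $2^{r - |U_{T,T'}|}$. The two signs combine as $(-1)^{|T|+1}(-1)^{|T'|+1} = (-1)^{|T|+|T'|}$, matching the lemma; the $T = \emptyset$ or $T' = \emptyset$ terms may be appended to the outer sum under the convention that a min over an empty set contributes zero, recovering the stated form \eqref{eq:iprod}.

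The main obstacle is establishing the inclusion--exclusion representation of the monotone-closure max cleanly and tracking the resulting sign; once that identity is in hand, the remainder is swapping sums and counting supersets of a fixed subset of $\{1,\ldots,r\}$, which is purely bookkeeping. An alternative derivation would sort $v_1,\ldots,v_s$ in decreasing order and write $x_1(B)$ as a telescoping sum of indicators $\mathbf{I}[\exists j \leq i: S_j \subseteq B]$, but the layer-cake route is cleaner and keeps the roles of the two valuation vectors symmetric.
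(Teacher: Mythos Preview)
Your proposal is correct and follows essentially the same route as the paper: write the inner product as a sum over bundles $B$ of $\max\{v_i:S_i\subseteq B\}\cdot\max\{v'_j:S'_j\subseteq B\}$, expand each max via the maximum--minimums (inclusion--exclusion) identity, swap sums, and count supersets of the fixed union. The only difference is cosmetic: the paper invokes the maximum--minimums identity by name, whereas you rederive it through the layer-cake representation and inclusion--exclusion on the superlevel events.
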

\begin{proof}
	The contribution of a particular bundle $B$ of items to the
        inner product is $(\max_{S_i \in S, S_i \subseteq B} v_i)
        \cdot (\max_{S'_j \in S', S'_j \subseteq B} v'_j)$, and thus
	\[
		x_1^T x'_1 = \sum_{B} \Bigl((\max_{\nfrac{S_i\in S}{S_i \subseteq B}} v_i) \cdot (\max_{\nfrac{S'_j\in S'}{S'_j\subseteq B}} v'_j)\Bigr).
	\]
	By the maximum-minimums identity, 
        which asserts that for any set $\{x_1, \dots, x_n\}$ of $n$ numbers, $\max\{x_1,\dots,x_n\} = \sum_{Z \subseteq X} ((-1)^{|Z|+1} \cdot (\min_{x_i \in Z}x_i))$, 
	\begin{align*}
		&\max_{\nfrac{S_i \in S}{S_i \subseteq B}} v_i = \sum_{\nfrac{T\subseteq S}{\mathclap{\bigcup_{S_i\in T}S_i\subseteq B}}} \Bigl((-1)^{|T|+1} \cdot (\min_{S_i \in T} v_i) \Bigr) \quad\text{and} \\
		&\max_{\nfrac{S'_j \in S'}{S'_j \subseteq B}} v'_j = \sum_{\nfrac{T' \subseteq S'}{\mathclap{\bigcup_{S'_j\in T'}S'_j\subseteq B}}} \Bigl( (-1)^{|T'|+1} \cdot (\min_{S'_j \in T'} v'_j) \Bigr).
	\end{align*}
	The inner product can thus be written as  
	\[
		\theta_1^T \theta'_1 = \sum_{B} \hspace{-.75em} \sum_{\nfrac{T\subseteq S, T'\subseteq S'}{\nfrac{\bigcup_{S_i\in T} S_i \subseteq B}{\bigcup_{S'_j \in T'} S'_j \subseteq B}}} \hspace{-1em}\Bigl( (-1)^{|T|+|T'|} \cdot (\min_{S_i \in T} v_i) \cdot (\min_{S'_j \in T'} v'_j) \Bigr).
	\]
	Finally, for given $T\subseteq S$ and $T'\subseteq S'$, there
        exist exactly $2^{r-|(\bigcup_{S_i \in T} S_i) \cup
          (\bigcup_{S'_j \in T'} S'_j)|}$ bundles $B$ such that
        $\bigcup_{S_i \in T} S_i \subseteq B$ and $\bigcup_{S'_j \in
          T'} S'_j \subseteq B$, and we obtain
	\begin{multline*}
		\theta_1^T \theta'_1 = \sum_{T \subseteq S, T' \subseteq S'} \Bigl( (-1)^{|T|+|T'|} \cdot (\min_{S_i \in T} v_i) \cdot (\min_{S'_j \in T'} v'_j) \cdot %
		2^{m-|(\bigcup_{S_i \in T} S_i) \cup (\bigcup_{S'_j \in T'} S'_j)|} \Bigr).
	\end{multline*}
\end{proof}

If $S$ and $S'$ have constant size, then the sum on the right hand
side of~\eqref{eq:iprod} ranges over a constant number of sets and can
be computed efficiently.

\section{Greedy Allocation Rule is not Weakly Monotone}
\label{app:greedy}

Consider a setting with a single agent and four items.

If the valuations $\theta_1$ of the agent are
\begin{align*}
	&v_1(\theta_1, o_1) = \begin{cases}
                  20 &\text{if $o_1 = \{ 1, 2, 3, 4 \}$}\\
                  12 &\text{if $1 \in o_1$ and $j \notin o_1$ for some $j \in \{2,3,4\}$, and}\\
                  0  &\text{else}
                \end{cases}
\end{align*}
then the allocation is $ \{ 1 \}$.

If the valuations are $\theta_1'$ such that
\begin{align*}
	&v_1(\theta_1',o_1) =  \begin{cases}
                     12  &\text{if $o_1 = \{ 1, 2, 3, 4 \}$}\\
                      5  &\text{if $1 \in o_1$ and $j \notin o_1$ for some $j \in \{2,3,4\}$, and}\\
                      0  &\text{else}
                   \end{cases}
\end{align*}
then the allocation is $\{ 1, 2, 3, 4 \}$.

We have $v_1(\theta_1', \{1, 2, 3, 4\}) - v_1(\theta_1', \{1\}) < v_1(\theta_1, \{1, 2, 3, 4\}) - v_1(\theta_1, \{ 1\})$
contradicting weak monotonicity.

\end{document}